\documentclass[11pt]{article}

\usepackage{amssymb}
\usepackage{amscd}
\usepackage{amsmath}
\usepackage{amsfonts}
\usepackage{amsthm}
\usepackage{color}
\usepackage{enumerate}
\usepackage[all,cmtip]{xy}

\usepackage[pdftex]{hyperref}

\theoremstyle{plain}

\newtheorem{thm}{Theorem}[section]
\newtheorem{prop}[thm]{Proposition}
\newtheorem{lemma}[thm]{Lemma}
\newtheorem{cor}[thm]{Corollary}
\newtheorem{defin}[thm]{Definition}
\newtheorem{question}[thm]{Question}

\theoremstyle{definition}

\newtheorem{rem}[thm]{Remark}
\newtheorem{exam}[thm]{Example}

\newcommand{\R}{{\mathbb{R}}}
\newcommand{\N}{{\mathbb{N}}}
\newcommand{\C}{{\mathbb{C}}}
\newcommand{\Z}{{\mathbb{Z}}}
\newcommand{\Q}{{\mathbb{Q}}}

\newcommand{\al}{{\alpha}}
\newcommand{\la}{{\lambda}}

\newcommand{\si}{{\sigma}}

\newcommand{\ga}{{\gamma}}

\newcommand{\om}{{\omega}}
\newcommand{\Om}{\Omega}

\newcommand{\Ci}{{\mathcal{C}}^{\infty}}

\newcommand{\op}{\operatorname}
\newcommand{\con}{\overline}
\newcommand{\bigo}{\mathcal{O}}
\newcommand{\Hilb}{\mathcal{H}}

\newcommand{\Op}{\op{Op}}

\newcommand{\Ham}{\op{Ham}}

\newcommand{\Id}{{\bf 1}}

\newcommand{\cL}{{\mathcal{L}}}
\newcommand{\tHam}{{\widetilde{\rm Ham}}}

\newcommand{\cO}{{\mathcal{O}}}

\newcommand{\cP}{{\mathcal{P}}}

\newcommand{\cU}{{\mathcal{U}}}

\newcommand{\sgrad}{{{ \rm sgrad}}}
\newcommand{\tphi}{{\widetilde{\phi}}}
\newcommand{\tpsi}{{\widetilde{\psi}}}

\begin{document}

\author{Laurent Charles and Leonid Polterovich$^a$}

\footnotetext[1]{Partially supported by the Israel Science Foundation grant 1102/20}

\title{Asymptotic representations of Hamiltonian diffeomorphisms  and quantization}
\maketitle

\begin{abstract}
We show that for a special class of geometric quantizations with ``small" quantum errors, the quantum classical correspondence gives rise to an asymptotic projective unitary representation of the group of Hamiltonian diffeomorphisms. As an application, we get an obstruction to Hamiltonian actions of finitely presented groups.
\end{abstract}

\section{Introduction and main results}
Geometric quantization is a mathematical theory modeling the quantum classical correspondence. The latter is a fundamental physical principle stating that the quantum mechanics contains the classical mechanics in the limit when the Planck constant goes to zero. In the present paper we focus on the correspondence between Hamiltonian diffeomorphisms
modeling motions of classical mechanics, and their quantum counterparts, unitary operators
coming from the Schr\"{o}dinger  evolution. We show that for a special class of geometric quantizations with ``small" quantum errors, which exist on a certain class of
phase spaces (see Theorem \ref{thm-fine}), this correspondence gives rise to an asymptotic unitary representation of the universal cover of the group of Hamiltonian diffeomorphisms
(Theorem \ref{thm-main-0}). Interestingly enough, together with recent results from group theory \cite{DGLT,LO}, this yields an obstruction to Hamiltonian actions of finitely presented groups (Corollary \ref{cor-LO}). Let us pass to precise definitions.

\subsection{Hamiltonian diffeomorphisms}
Let $(M^{2n},\omega)$ be a closed symplectic manifold. Here $\omega$ is a closed differential
$2$-form, whose $n$-th power does not vanish at any point and, thus, gives rise to a volume form on $M$. For a function $f \in C^\infty(M)$ introduce its {\it Hamiltonian vector field}
$\sgrad f$ as the unique solution of the equation $i_{\sgrad f}\omega = -df$. Given a smooth
function $f: M \times [0,1] \to M$, denote $f_t(x):=f(x,t)$, and consider the time dependent vector field $\sgrad f_t$. Its evolution defines a path of diffeomorphisms $\phi_t$ on $M$ with $\phi_0=\Id$. This path is called {\it a Hamiltonian path}, and the diffeomorphisms $f_t$ are called {\it Hamiltonian} diffeomorphisms. The latter form a group denoted by $\Ham(M,\omega)$ (see \cite{P-book} for further details).

Denote by $\tHam(M,\omega)$ the universal cover of $\Ham(M,\omega)$. Its elements  $\widetilde \phi$ are Hamiltonian paths $\{\phi_t\}$, $t \in [0,1]$ with $\phi_0=\Id$, considered up to a homotopy with fixed end points. We write $\phi =\phi_1$ for the projection of $\widetilde \phi$ to $\Ham(M,\omega)$. Every path $\{\phi_t\}$ is uniquely determined by a time-dependent generating Hamiltonian $f_t \in C^\infty(M)$, where the functions $f_t$ are assumed to have zero mean: $\int_M f_t \; \omega^n  = 0$ for all $t$. We shall say that
$\widetilde{\phi} \in \tHam(M,\omega)$ is {\it generated} by a Hamiltonian $f\in C^\infty(M \times [0,1])$.

Let us mention that the fundamental group $\pi_1(\Ham(M,\omega))$ is an abelian group, and we have a central extension
\begin{equation}
\label{eq-sequence}
\xymatrix{
1 \ar[r] &\pi_1(\Ham(M,\omega)) \ar[r]  &\tHam(M,\omega) \ar[r]^\tau & \Ham(M,\omega)\ar[r] &1\;.}
\end{equation}

\subsection{Fine quantizations}
Define a fundamental operation on functions on a symplectic manifold called {\it the Poisson bracket}:
$\{f,g\}= L_{\sgrad f} g$, where $L$ stands for the Lie derivative. We write
$\|f\|=\max |f|$ for the uniform norm of a function $f$.

In what follows we denote by $\cL( \Hilb )$ the space of Hermitian operators
acting on a finite-dimensional complex Hilbert space $\Hilb$, and write
$\mathbb{U}( \Hilb )$ for the unitary group of $\Hilb$.

\begin{defin} \label{def:fine_quantization}
  {\rm  A {\it fine quantization} of $(M,\omega)$ consists of a sequence of positive numbers $\hbar_k$ with $\lim_{k\to \infty} k\hbar_k =1$,
    a family of finite-dimensional complex Hilbert spaces $\Hilb_k$ such that
     \begin{equation}\label{eq-dim}
\dim \Hilb_k = \Bigl(\frac{k}{2\pi}\Bigr)^n  \op{Vol} (M,\omega)+ \bigo(k^{n-1})\;,
\end{equation}
    and a family of $\R$-linear maps $Q_k: C^\infty(M) \to \cL(\Hilb_k)$ with $Q_k(1) =\Id$, satisfying the following properties:
\begin{itemize}
\item[{(P1)}] ({\bf norm correspondence}) $\|Q_k(f)\|_{\op{op}} = \|f\|+ \bigo(1/k)$;
\item[{(P2)}] ({\bf bracket correspondence}) $[Q_k(f),Q_k(g)] = \frac{\hbar_k}{i} Q_k(\{f,g\}) + \bigo(1/k^3)$,
\end{itemize}
where the remainder is understood in the operator norm $\|\cdot\|_{\op{op}}$.}
\end{defin}

The wording ``fine" is chosen in order to emphasize that the remainder in (P2) is $\bigo(1/k^3)$,
as opposed to $\bigo(1/k^2)$, as it happens for a wide class of geometric quantizations.
For K\"{a}hler quantizations (see Section \ref{sec:an-impr-corr} below), the order of the remainder cannot be improved to $\bigo(1/k^4)$, see \cite[p.470]{oim_eq}. It is unknown whether the same holds true for ``abstract" quantizations defined by axioms (P1) and (P2).


Recall that $(M,\omega)$ is {\it quantizable} if the cohomology class $[\omega]/(2\pi)$
is integral. The following conditions on the first Chern class $c_1(TM)$ and the cohomology
class of symplectic form $[\omega]$ of a quantizable symplectic manifold are equivalent:
\begin{itemize}
\item[{(C1)}]  the line $\frac{1}{2} c_1(TM) - \R [\om]$ in $H^2(M, \R)$ intersects
the lattice of integral classes $H^2 (M,\Z) / \text{torsion}$;
\item[{(C2)}] $c_1$ takes even values on $\text{Ker}([\omega])$, where both
  $c_1$ and $[\omega]$ are considered as morphisms $H_2(M,\Z)/\text{torsion}
  \to \R$.
\end{itemize}
Indeed, (C1) yields (C2) immediately. In the opposite direction, choose a basis in $\text{Ker}([\omega])$,  say $e_1,...,e_{m-1}$, and extend it to a basis in $H_2(M,\Z)/\text{torsion}$ by $e_0$.
Then $\omega(e_0) = 2\pi N$, where the number $N \in \Z$ is defined as an integer such that $[\omega]/(2\pi N)$ is a primitive vector. To get (C1) from (C2),  we choose $\lambda = (c_1(e_0)+2p)/(2N)$, with any integer $p$.

\begin{defin} {\rm We say that $(M,\omega)$ satisfies {\it condition }(C) if
it satisfies one of the equivalent conditions (C1) or (C2).}
\end{defin}

Condition (C) may be viewed as a generalisation of the existence of metaplectic
structure. It is more general:  all complex projective spaces satisfy
condition $(\op{C})$ because their second cohomology groups are one-dimensional. However, only the
projective spaces with an odd complex dimension have a metaplectic structure.

\begin{exam}
{\rm Take $M$ to be $\mathbb{C}P^2$ blown up at one point. Let $L, E$ be the
  basis in $H_2(M,\Z)$ with $L$ being the class of a general line and $E$ of
  the exceptional divisor, respectively.  There exist a symplectic forms on
  $M$ with $\omega(L)=2\pi m$, $\omega(E)=2\pi n$, for any integral $m > n > 0
  $ We have $c_1(nL-mE)= 3n-m$, and hence (C2) is satisfied iff $m=n \mod 2
  $.}\end{exam}


\begin{thm} \label{thm-fine}  Every quantizable closed symplectic manifold $M$ satisfying
  condition $(\op{C}))$ admits a fine quantization.
\end{thm}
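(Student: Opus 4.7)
The plan is to construct the quantization via Berezin--Toeplitz operators in the almost-K\"ahler setting, using condition (C) to supply an auxiliary line bundle that substitutes for the (possibly nonexistent) square root of the canonical bundle. First, let $L \to M$ be a prequantum Hermitian line bundle, i.e.\ a Hermitian line bundle with unitary connection of curvature $-i\om$; this exists since $[\om]/(2\pi)$ is integral. By condition (C1), choose $\lambda \in \R$ and a Hermitian line bundle $F \to M$ with $c_1(F) = \tfrac12 c_1(TM) - \lambda[\om]$, equipped with an arbitrary Hermitian connection. Fix a compatible almost complex structure $J$ and set $E_k := L^k \otimes F$. Define $\Hilb_k$ to be the span of the low-lying eigensections (below a uniform spectral gap) of the spin$^c$ Dirac operator on $\Lambda^{0,\bullet} T^*M \otimes E_k$; for $k$ large this space concentrates in bidegree $(0,0)$ and is the almost-K\"ahler analogue of the holomorphic sections of $E_k$. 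The Riemann--Roch--Hirzebruch leading term gives \eqref{eq-dim}. Let $\Pi_k \colon L^2(M,E_k) \to \Hilb_k$ be the orthogonal projector, $M_f$ the multiplication operator by $f$, and define $Q_k(f) := \Pi_k M_f \Pi_k$; set $\hbar_k := 1/(k - 2\pi\lambda)$, so that $k\hbar_k \to 1$. Then $Q_k$ is $\R$-linear, satisfies $Q_k(1) = \Id$, and sends real functions to Hermitian operators.

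The norm correspondence (P1) is the classical sharp asymptotic $\|\Pi_k M_f \Pi_k\|_{\op{op}} = \|f\| + \bigo(1/k)$ for Toeplitz operators, contained in the Ma--Marinescu calculus. The heart of the theorem is (P2). Toeplitz calculus provides a star-product expansion $Q_k(f) Q_k(g) \sim \sum_{j\ge 0} \hbar_k^j Q_k\bigl(c_j(f,g)\bigr)$, with $c_0(f,g)=fg$ and antisymmetric part of $c_1$ equal to $-\tfrac i2\{f,g\}$. The order of the commutator remainder is controlled by the antisymmetric part of $c_2$. A direct computation, which one extracts from the Ma--Marinescu formulas for subprincipal symbols of twisted Toeplitz operators, shows that this antisymmetric part is a pointwise scalar multiple of $\{f,g\}$, the scalar being an affine function of the curvature of $F$ and of a Ricci-type curvature tensor of $(\om,J)$. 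Condition (C) is engineered precisely so that $c_1(F) = \tfrac12 c_1(TM) - \lambda[\om]$ makes this scalar vanish identically; the ambiguity in $\lambda$ merely shifts the effective ``Planck level'' by a constant, which is absorbed by the choice of $\hbar_k$ above. Consequently $[Q_k(f), Q_k(g)] = \tfrac{\hbar_k}{i} Q_k(\{f,g\}) + \bigo(1/k^3)$, establishing (P2).

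The main obstacle is the explicit verification of the subprincipal cancellation in (P2). In the integrable K\"ahler case, the analogous computation for Toeplitz operators with a genuine half-form is classical. The almost-K\"ahler and half-form-free situation requires one to revisit the Bochner--Kodaira identity underlying the Ma--Marinescu expansion of the Bergman kernel for twisted Dirac Laplacians, and to check that the twist by $F$ dictated by (C) replicates, at the level of the subprincipal symbol, the effect of an honest $\tfrac12 c_1(TM)$ twist. Once this curvature bookkeeping is done, the remaining items---$\R$-linearity, Hermiticity, the dimension asymptotic, and the norm correspondence---are standard features of Berezin--Toeplitz quantization and require no further argument.
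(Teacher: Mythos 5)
There is a genuine gap in the verification of (P2), and it concerns the very definition of the quantization map. You define $Q_k(f) := \Pi_k M_f \Pi_k$, the \emph{raw} Berezin--Toeplitz operator, and then assert that the antisymmetric part of the second star-product coefficient $c_2(f,g)$ is ``a pointwise scalar multiple of $\{f,g\}$.'' This is false for raw Toeplitz operators, with or without a cleverly chosen auxiliary bundle. For the undressed operators $T_k(f) = \Pi_k M_f \Pi_k$ one has $[T_k(f),T_k(g)] = (ik)^{-1} T_k(\{f,g\}) + \bigo(k^{-2})$, and the $k^{-2}$ term involves a genuine second-order bidifferential expression of the shape $\{\Delta f, g\} + \{f,\Delta g\} - \Delta\{f,g\}$ (up to the $\om_1$ contribution), where $\Delta$ is the holomorphic Laplacian. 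No choice of twisting bundle $F$ can annihilate these Laplacian terms, because they are dictated by the metric on $M$, not by the auxiliary curvature; condition (C) only controls the residual term proportional to $\{f,g\}$ after those differential terms have already been cancelled by other means.

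The paper's proof handles this by \emph{correcting} the Toeplitz operator at first order: in the K\"ahler case it sets $\Op_k(f) := T_k\bigl(f - (2k)^{-1}\Delta f\bigr)$, and in the general symplectic case it uses $\Op_k(f) = T_k(f) + k^{-1}T_k(Pf)$ for a suitable differential operator $P$ coming from the subprincipal-symbol analysis of \cite{oim_sub}. With this corrected $\Op_k$, the asymmetric $k^{-2}$ term in the commutator reduces to $k^{-1}\om_1(X_f,X_g)$, and it is precisely \emph{here} that condition (C) and the auxiliary bundle enter: one chooses $L'$ (your $F$) so that $\om_1 = \la\om$ with $\la$ rational, which turns this residual term into $\la\{f,g\}$ and lets it be absorbed into the effective Planck constant $\hbar_k = (k+\la)^{-1}$. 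The remainder of your outline --- the choice of the auxiliary bundle via (C1), the spectral definition of $\Hilb_k$ via the spin$^c$ Dirac operator in the almost-K\"ahler setting, the dimension count via Riemann--Roch, and the norm correspondence (P1) --- is sound and matches the paper's framework; what is missing is the $-\tfrac{1}{2k}\Delta f$ (or more generally the $k^{-1}Pf$) correction to the symbol, without which (P2) fails at the required order.
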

The proof is given in Section \ref{sec:an-impr-corr}.

\subsection{Asymptotic unitary representation}
Let $Q_k$ be a fine quantization. For a Hamiltonian $f_t$ as above consider
the unitary quantum equation $U_k(t): \Hilb_k \to \Hilb_k$ described by the  Schroedinger
equation
\begin{equation}\label{eq-Schr}
\dot{U}_k(t) = -\frac{i}{\hbar_k} Q_k(f_t)U_k(t), \;\; U_k(0) = \Id\;.
\end{equation}
One can view the time-one map $U_k=U_k(1)$ as a quantization of the element $\widetilde{\phi}$ represented by $f_t$ \cite{L}.

Consider family of maps $\mu:=\{\mu_k\}$,
$$\mu_k : \tHam(M,\omega) \to \mathbb{U}(\Hilb_k),\;\; \widetilde{\phi} \mapsto U_{k}\;.$$
Let us emphasize that $\mu_k(\tphi)$ depends on the specific choice of a Hamiltonian path joining the identity with $\phi$, in the class of paths homotopic with fixed endpoints.

\begin{thm}\label{thm-main-0}$\;$
\begin{itemize}
\item[{(i)}] The unitaries $\mu_k(\tphi)$ and $\mu'_k(\tphi)$ defined via two different choices
of paths homotopic with fixed endpoints representing $\phi \in \tHam(M,\omega)$,
satisfy
\begin{equation}\label{eq-mu-wd}
\|\mu_k(\tphi)-\mu'_k(\tphi)\|_{\op{op}} = \bigo(1/k)\;.
\end{equation}
\item[{(ii)}] For every $\tphi,\tpsi \in \tHam(M,\omega)$
\begin{equation}
\label{eq- almrep}
\|\mu_k(\tphi)\mu_k(\tpsi) -\mu_k(\tphi\tpsi)\|_{\op{op}} = \bigo(1/k)\;.
\end{equation}
\item[{(iii)}] If $\phi \neq \Id$,
\begin{equation}\label{eq-norm-b}
\|\mu_k(\tphi)- \Id \|_{\op{op}} \geq 1/2+ \bigo(1/k)\;.
\end{equation}
\end{itemize}
\end{thm}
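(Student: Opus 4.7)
The plan is to prove (i) directly, deduce (ii) from (i) via concatenation of paths, and prove (iii) by combining an Egorov-type estimate with a standard commutator trick.

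For (i), I would connect the two given paths by a smooth homotopy $\{\phi^s_t\}_{s\in[0,1]}$ with $\phi^s_0=\Id$ and $\phi^s_1=\phi$ for all $s$, write $f^s_t$ for the mean-zero Hamiltonian generating $t\mapsto\phi^s_t$, and $h^s_t$ for the mean-zero Hamiltonian of $(\partial_s\phi^s_t)\circ(\phi^s_t)^{-1}$. The fixed-endpoints condition forces $h^s_0\equiv h^s_1\equiv 0$, and the compatibility of the mixed partials is $\partial_s f^s_t - \partial_t h^s_t = \{f^s_t, h^s_t\}$. Setting $F_t^s=Q_k(f^s_t)$, $H_t^s=Q_k(h^s_t)$, applying $Q_k$ and using (P2) together with $\hbar_k\sim 1/k$ yields
\[
\partial_s F_t^s - \partial_t H_t^s - \tfrac{i}{\hbar_k}[F_t^s, H_t^s] = \bigo(1/k^2).
\]
Let $U^s(t)$ denote the Schr\"odinger evolution of $f^s_t$. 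Direct computation gives
\[
\partial_t\bigl(U^s(t)^{-1}\partial_s U^s(t)\bigr) = -\tfrac{i}{\hbar_k}\,U^s(t)^{-1}(\partial_s F_t^s)\,U^s(t)
\]
and
\[
\partial_t\bigl(U^s(t)^{-1} H_t^s U^s(t)\bigr) = U^s(t)^{-1}(\partial_t H_t^s) U^s(t) + \tfrac{i}{\hbar_k}U^s(t)^{-1}[F_t^s, H_t^s]U^s(t).
\]
Substituting the quantized compatibility into the first identity and integrating in $t$ from $0$ to $1$, the right-hand side telescopes to $-\tfrac{i}{\hbar_k}\bigl[U^s(t)^{-1}H_t^s U^s(t)\bigr]_{t=0}^{t=1}$ plus a $\bigo(1/k)$ tail; the boundary terms vanish because $H_0^s=H_1^s=0$, leaving $\|U^s(1)^{-1}\partial_s U^s(1)\|_{\op{op}}=\bigo(1/k)$. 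Integrating in $s$ yields (\ref{eq-mu-wd}). Item (ii) follows from (i) upon noting that if $\tphi,\tpsi$ are represented by $\phi_t,\psi_t$ with Hamiltonians $f_t,g_t$, then the concatenation running $\psi_{2t}$ over $[0,\tfrac12]$ and $\phi_{2t-1}\psi$ over $[\tfrac12,1]$ (with Hamiltonians $2g_{2t}$ and $2f_{2t-1}$) is a representative of $\tphi\tpsi$ whose Schr\"odinger evolution is exactly $U_k^f(1) U_k^g(1)=\mu_k(\tphi)\mu_k(\tpsi)$; any other representative is homotopic to this one rel endpoints, so (i) produces (\ref{eq- almrep}).

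For (iii), the central input is a quantum Egorov estimate
\[
\|\mu_k(\tphi) Q_k(h) \mu_k(\tphi)^{-1} - Q_k(h\circ\phi^{-1})\|_{\op{op}} = \bigo(1/k), \qquad h\in C^\infty(M),
\]
which I would prove by observing that $h_t := h\circ\phi_t^{-1}$ satisfies $\partial_t h_t = \{f_t, h_t\}$, and that the error $A(t) := U_k(t)Q_k(h)U_k(t)^{-1} - Q_k(h_t)$ obeys a Heisenberg-type ODE $\dot A(t) = \tfrac{i}{\hbar_k}[F_t, A(t)] + \bigo(1/k^2)$ (the remainder coming from (P2)); since the commutator term acts by unitary conjugation and preserves the operator norm, Gr\"onwall integration gives $\|A(1)\|_{\op{op}}=\bigo(1/k)$. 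Granting this, pick a point $x$ with $\phi(x)\neq x$ and a small open ball $V\ni x$ with $\phi(V)\cap V=\emptyset$, and choose $h\in C^\infty(M)$ supported in $V$ with $\|h\|=1$. Then $h$ and $h\circ\phi^{-1}$ have disjoint supports, so $\|h\circ\phi^{-1} - h\| = 1$, and by (P1) combined with $\R$-linearity of $Q_k$ this lifts to $\|Q_k(h\circ\phi^{-1}) - Q_k(h)\|_{\op{op}} = 1 + \bigo(1/k)$. Together with Egorov, $\|[\mu_k(\tphi), Q_k(h)]\|_{\op{op}} = 1 + \bigo(1/k)$, while the trivial bound $\|[\mu_k(\tphi), Q_k(h)]\|_{\op{op}} \leq 2\|\mu_k(\tphi)-\Id\|_{\op{op}}\|Q_k(h)\|_{\op{op}}$ with $\|Q_k(h)\|_{\op{op}} = 1+\bigo(1/k)$ delivers (\ref{eq-norm-b}).

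The main obstacle, and the only place where the \emph{fine} hypothesis of (P2) (remainder $\bigo(1/k^3)$ rather than the generic $\bigo(1/k^2)$) is indispensable rather than merely convenient, is the homotopy estimate in (i). The prefactor $1/\hbar_k$ in front of the integrand of $W^s(1) = -\tfrac{i}{\hbar_k}\int_0^1 U^s(t)^{-1}(\partial_s F_t^s) U^s(t)\,dt$ is of order $k$, so in order for the integrand, rewritten as a total $t$-derivative plus a remainder, to collapse to $\bigo(1/k)$ after integration, the remainder in the quantized compatibility identity must be $\bigo(1/k^2)$; a mere $\bigo(1/k^2)$ in (P2) would leave an $\bigo(1)$ remainder in $W^s(1)$, making (\ref{eq-mu-wd}) and (\ref{eq- almrep}) trivial. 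The Egorov step underlying (iii) is more forgiving and would already work under the coarser hypothesis.
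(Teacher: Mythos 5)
Your argument is correct, and parts (i) and (iii) coincide with the paper's proofs up to bookkeeping: for (i), the paper works with $B(t,s)$ defined by $\partial_s U = -iBU$ and compares it with $C=\hbar_k^{-1}Q_k(q)$ instead of conjugating by $U$, but the flatness identity $\partial_s f - \partial_t h = \{f,h\}$, the use of (P2) to absorb the error $\hbar_k^{-1}\bigl(Q_k(\{f,h\}) - \tfrac{i}{\hbar_k}[Q_k(f),Q_k(h)]\bigr) = \bigo(1/k)$, and the vanishing boundary terms at $t=0,1$ are identical to yours; for (iii), the paper uses the same displacement-plus-commutator trick $\|[U_k,A]\|\leq 2\|U_k-\Id\|\,\|A\|$ with the Egorov estimate and (P1). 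One small sign: $\partial_t(h\circ\phi_t^{-1}) = -\{f_t, h\circ\phi_t^{-1}\}$, not $+\{f_t,h_t\}$, so the Heisenberg-type ODE carries a minus sign; this is immaterial for the norm estimate.

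Your proof of (ii) takes a genuinely different route. The paper proves (ii) directly by observing that the pointwise product $\theta_t=\phi_t\psi_t$ is generated by $h_t = f_t + g_t\circ\phi_t^{-1}$, introducing $S(t) = Q_k(f_t) + U_k(t)Q_k(g_t)U_k(t)^{-1}$, using the $\bigo(1/k^2)$ Egorov theorem to conclude $Q_k(h_t) = S(t) + \bigo(1/k^2)$, and then comparing the two Schr\"odinger evolutions $W_k=U_kV_k$ and $Z_k$; this gives an independent proof of (ii) for that particular representative. You instead reduce (ii) to (i) by choosing the concatenated representative of $\tphi\tpsi$ whose Schr\"odinger evolution literally equals $\mu_k(\tphi)\mu_k(\tpsi)$, then invoking homotopy invariance. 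Both work. Your reduction is cleaner conceptually, but you should note that the concatenated Hamiltonian is only piecewise smooth in $t$, so one either smooths the reparametrization or observes that the Schr\"odinger flow makes sense for piecewise-smooth generators (a standard but worth-stating point). On your concluding remark about where ``fine'' is indispensable: you are right that the Egorov estimate used in (iii) already follows from a $\bigo(1/k^2)$ remainder in (P2). It is worth adding, though, that the paper's independent proof of (ii) also relies crucially on the fine hypothesis, since it uses the sharpened $\bigo(1/k^2)$ Egorov theorem and then multiplies by $\hbar_k^{-1}$; so under either route (i) and (ii) need the fine remainder, and only (iii) survives the coarser one.
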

The proof is given in Section \ref{sec:proof-theorem-main0}.

\subsection{First constraints on Hamiltonian group actions}
The collection of maps $\mu_k$ gives rise to an interesting
algebraic object. In order to describe it, we need some preliminaries from \cite{DGLT,LO}.
For $p \geq 1$ and an operator $A :\Hilb \to \Hilb$ acting on a $d$-dimensional Hilbert space $\Hilb$
denote by $\|A\|_p$ its $p$-th Schatten norm given by
$$\|A\|_p = \left(\text{tr} \bigl( \bigl(\sqrt{A^*A}\bigr)^p \bigr) \right)^{1/p}\;.$$
Recall that
\begin{equation}\label{eq-Sch}
\|A\|_{\op{op}} \leq \|A\|_p \leq d^{1/p} \|A\|_{\op{op}}\;.
\end{equation}

\begin{defin}[\cite{LO}]\label{def-appr}{\rm
 A group $\Gamma$ is called {\it $p$-norm approximated} if
there exists a family of maps
$$\rho_k: \Gamma \to \mathbb{U}(\Hilb_k)\;,$$
where $\Hilb_k$ is a sequence of Hilbert spaces of growing dimension,
such that
\begin{equation}\label{eq-almostrep-1}
\lim \|\rho_k(x)\rho_k(y) - \rho_k(xy)\|_p = 0, \quad \forall x,y \in \Gamma\;,
\end{equation}
and
\begin{equation}\label{eq-almostrep-2}
\liminf \|\rho_k(x) - \Id\|_p > 0, \quad \forall x \in \Gamma, x\neq 1\;.
\end{equation}
We call any sequence of maps $\rho_k$ satisfying \eqref{eq-almostrep-1} {\it an asymptotic
representation} of $\Gamma$ in the sequence of unitary groups equipped with the $p$-norms.
}
\end{defin}

Theorem \ref{thm-main-0} combined with estimate \eqref{eq-Sch} and formula \eqref{eq-dim}
immediately yields the following result.

\medskip
\noindent
\begin{cor}\label{cor-LO} Assume that a $2n$-dimensional closed symplectic manifold $M$ admits a fine quantization.  Let $\Gamma \subset \tHam(M,\omega)$ be
a finitely presented subgroup with
\begin{equation}\label{eq-Gamma}
\Gamma \cap \pi_1( \Ham (M,\om)) = \{1\}\;.
\end{equation}
Then $\Gamma$ is $p$-norm approximated for every $p >n$.
\end{cor}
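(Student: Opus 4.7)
The plan is to simply take $\rho_k := \mu_k|_\Gamma$ and verify the two clauses of Definition \ref{def-appr}, using Theorem \ref{thm-main-0} together with the Schatten bound \eqref{eq-Sch} and the dimension count \eqref{eq-dim}. First I would fix, once and for all, a Hamiltonian path representing each $\gamma \in \Gamma$ and set $\rho_k(\gamma) := \mu_k(\gamma)$ via this choice; Theorem \ref{thm-main-0}(i) guarantees that any other choice would modify $\rho_k(\gamma)$ by only $\bigo(1/k)$ in operator norm, so the asymptotic statements below are insensitive to these choices.

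To verify \eqref{eq-almostrep-1}, I would combine parts (i) and (ii) of Theorem \ref{thm-main-0} to produce the operator norm estimate $\|\rho_k(\gamma)\rho_k(\gamma') - \rho_k(\gamma\gamma')\|_{\op{op}} = \bigo(1/k)$ for each pair $\gamma,\gamma' \in \Gamma$ (specifically, (ii) handles the concatenated path while (i) reconciles it with the pre-chosen representative of $\gamma\gamma'$). Then the right half of \eqref{eq-Sch}, together with \eqref{eq-dim} (which gives $\dim \Hilb_k = \bigo(k^n)$), converts this to
\begin{equation*}
\|\rho_k(\gamma)\rho_k(\gamma') - \rho_k(\gamma\gamma')\|_p \leq (\dim \Hilb_k)^{1/p}\cdot \bigo(1/k) = \bigo\bigl(k^{n/p-1}\bigr),
\end{equation*}
which tends to $0$ precisely when $p > n$.

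For \eqref{eq-almostrep-2}, pick $\gamma \in \Gamma$ with $\gamma \neq 1$. The hypothesis \eqref{eq-Gamma} says $\gamma \notin \pi_1(\Ham(M,\omega)) = \ker \tau$, so $\phi := \tau(\gamma) \neq \Id$. Theorem \ref{thm-main-0}(iii) therefore applies, yielding $\|\rho_k(\gamma) - \Id\|_{\op{op}} \geq 1/2 + \bigo(1/k)$; the left inequality in \eqref{eq-Sch} then gives $\liminf \|\rho_k(\gamma) - \Id\|_p \geq 1/2 > 0$, as required.

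There is really no hard step here: the entire substance of the argument is the trade-off $p > n$ between the $\bigo(1/k)$ operator norm decay of Theorem \ref{thm-main-0}(ii) and the $(\dim \Hilb_k)^{1/p} = \bigo(k^{n/p})$ blow-up in the Schatten comparison. The finitely-presented hypothesis on $\Gamma$ plays no role in the verification itself and is presumably included only so that the conclusion can be fed directly into the group-theoretic results of \cite{DGLT,LO}.
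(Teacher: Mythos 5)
Your proof is correct and matches the paper's intended argument exactly; the paper itself just states that the corollary ``immediately'' follows from Theorem~\ref{thm-main-0}, estimate~\eqref{eq-Sch}, and formula~\eqref{eq-dim}, and you have simply spelled out the computation. Your closing remark that the finitely-presented hypothesis is not used in the verification but is carried along for the later group-theoretic applications is also accurate.
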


Denote by $\cL\cO_p$ the class of finitely presented groups with are {\bf not} $p$-norm
approximated.  Existence of such groups for $p >1$ was established by Lubotzky and Oppenheim in \cite{LO}. For instance, certain finite central extensions of lattices in simple $\ell$-adic Lie groups belong to this class.

Corollary \ref{cor-LO} yields obstructions to actions of groups from $\cL\cO_p$ on certain symplectic manifolds.

\begin{exam}\label{exam-2}{\rm
Let $M$ be a closed oriented surface of genus $\geq 2$ equipped with an area form $\omega$.
Then $\pi_1(\Ham(M,\omega))=1$ (e.g. see \cite{P-book}). Furthermore, $H^2(M,\Z)=\Z$,
and hence $M$ satisfies condition (C) of Theorem \ref{thm-fine}.  Thus no group of class
$\cL\cO_p$ admits a faithful Hamiltonian action on $(M,\omega)$.
}
\end{exam}

Denote by $K_p \subset \pi_1(\Ham(M,\omega))$ the subgroup formed by elements
$\tphi \in \tHam(M,\omega)$ with $\lim_{k \to \infty} \|\mu_k(\tphi)-\Id\|_{p}= 0$.
Assumption \eqref{eq-Gamma} in Corollary \ref{cor-LO}  can be replaced to
\begin{equation}\label{eq-Gamma-1}
\Gamma \cap K_p = \{1\}\;.
\end{equation}
It would be interesting to explore the subgroup $K_p$.

\subsection{Asymptotic projective representations and more constraints}
What can we say about the restriction of the approximate representation $\mu_k$ to the fundamental
group $\pi_1( \Ham (M, \om) ) \subset {\tHam}(M,\omega)$ ? The following enhancement of Theorem \ref{thm-fine}
sheds light on this question.

\begin{thm} \label{thm-fine-1}  Every K\"ahler closed symplectic manifold $M$ satisfying
condition $(\op{C})$ admits a fine quantization which  satisfies
\begin{equation}\label{eq-loop}
\mu_k(\gamma) = e^{ir_k(\gamma)}\Id + \bigo(1/k)\;,
\end{equation}
where $r_k: \pi_1(\Ham(M,\omega)) \to \R/(2\pi\Z)$ is a sequence of homomorphisms.
\end{thm}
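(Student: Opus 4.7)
The plan is to exploit the finer structure of the specific fine quantization constructed in the K\"ahler case from (possibly twisted) Berezin-Toeplitz operators. Beyond the axioms (P1)-(P2), this quantization enjoys a sharp Egorov theorem for the Schr\"odinger propagators and a covariant (Berezin) symbol calculus $A\mapsto\sigma_A\in C^\infty(M)$; these are used to show that for each loop $\gamma\in\pi_1(\Ham(M,\omega))$ the operator $\mu_k(\gamma)$ is $\bigo(1/k)$-close to a scalar multiple of $\Id$, after which the scalars are packaged into an exact homomorphism.

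Fix $\gamma\in\pi_1(\Ham(M,\omega))$. Because (P2) has a $\bigo(1/k^3)$ remainder, a Gronwall argument in the Egorov ODE
\[
\frac{d}{dt}\bigl(U_k(t)^*Q_k(f)U_k(t)\bigr)=\frac{i}{\hbar_k}U_k(t)^*[Q_k(f_t),Q_k(f)]U_k(t)
\]
gives, for every $f\in C^\infty(M)$, the sharp estimate
\[
U_k(1)^*Q_k(f)U_k(1)=Q_k(f\circ\phi_1^{-1})+\bigo(1/k^2).
\]
For the loop $\gamma$ we have $\phi_1=\Id$, hence $[\mu_k(\gamma),Q_k(f)]=\bigo(1/k^2)$. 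Translating to Berezin symbols via the product expansion $\sigma_{[A,B]}=(\hbar_k/i)\{\sigma_A,\sigma_B\}+\bigo(1/k^2)$ yields $\{\sigma_{\mu_k(\gamma)},f\}=\bigo(1/k)$ for every $f$, forcing $\sigma_{\mu_k(\gamma)}$ to be $\bigo(1/k)$-close in sup norm to some constant $\lambda_k(\gamma)\in\C$. The norm comparison in the Berezin-Toeplitz calculus then gives
\[
\mu_k(\gamma)=\lambda_k(\gamma)\,\Id+\bigo(1/k),
\]
and unitarity forces $|\lambda_k(\gamma)|=1+\bigo(1/k)$, so that $\tilde r_k(\gamma):=\arg\lambda_k(\gamma)\in\R/2\pi\Z$ is well defined.

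It remains to correct $\tilde r_k$ to a true homomorphism. Applying Theorem~\ref{thm-main-0}(ii) to pairs $\gamma_1,\gamma_2\in\pi_1(\Ham(M,\omega))$ together with $\mu_k(1)=\Id$ gives
\[
\tilde r_k(\gamma_1\gamma_2)=\tilde r_k(\gamma_1)+\tilde r_k(\gamma_2)+\bigo(1/k)\pmod{2\pi},
\]
so $\tilde r_k$ is a pointwise approximate homomorphism of the abelian group $\pi_1(\Ham(M,\omega))$ to $\R/2\pi\Z$. A standard construction (restrict to the finitely generated subgroup containing any given element, define $r_k$ on generators by rounding $\tilde r_k$ so as to respect the torsion relations, which are themselves $\bigo(1/k)$-compatible with $\tilde r_k$ thanks to $\mu_k(1)=\Id$, and extend by linearity) produces a homomorphism $r_k$ with $|r_k(\gamma)-\tilde r_k(\gamma)|=\bigo(1/k)$ pointwise. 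Absorbing the correction into the error gives \eqref{eq-loop}.

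The main technical obstacle is the passage from ``$\mu_k(\gamma)$ approximately commutes with every $Q_k(f)$'' to ``$\mu_k(\gamma)$ is $\bigo(1/k)$-close to a scalar'': one needs Egorov with remainder $\bigo(1/k^2)$, one power of $k$ better than what formally follows from axiom (P2), together with the quantitative symbol-to-operator norm control of the Berezin-Toeplitz calculus. This is precisely where the K\"ahler hypothesis is essential and where the argument exceeds what Theorem~\ref{thm-fine} alone provides.
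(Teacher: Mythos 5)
Your approach is genuinely different from the paper's, and it has a substantive gap at exactly the point you flag as the ``main technical obstacle.''

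The paper proves Theorem~\ref{thm-fine-1} by invoking the microlocal structure theorem of \cite{CLF}: the Schr\"odinger propagator $U_{k,t}$ of a Berezin--Toeplitz Hamiltonian is a Lagrangian state associated to the graph of $\phi_t$. Since $\phi_1=\Id$ for a loop, the graph is the diagonal and the propagator at time $1$ is, by this structure, automatically of the form $e^{ik\theta}T_k(\sigma)+\bigo(1/k)$ for some $\theta\in\R$ and $\sigma\in C^\infty(M)$; the proof then computes $\theta$ and $\sigma$ via the half-form formalism and reads off $r_k(\gamma)=kA(\gamma)+I(\gamma)$, which is manifestly a homomorphism because the action $A$ and the mixed action--Maslov invariant $I$ are. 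Your proposal instead tries to derive the scalar form purely from the approximate commutation $\|[\mu_k(\gamma),Q_k(f)]\|_{\op{op}}=\bigo(1/k^2)$. That step is correct, but the inference from it to $\mu_k(\gamma)=\lambda_k\Id+\bigo(1/k)$ is not carried out: the product expansion $\sigma_{[A,B]}=(\hbar_k/i)\{\sigma_A,\sigma_B\}+\bigo(1/k^2)$ that you invoke is a theorem about Toeplitz operators, and $\mu_k(\gamma)$ is a priori an arbitrary unitary, not a Toeplitz operator; moreover the covariant symbol map $A\mapsto\sigma_A$ is not quantitatively invertible on all of $\cL(\Hilb_k)$ (one only has $\|\sigma_A\|_\infty\le\|A\|_{\op{op}}$ in general), so even if $\sigma_{\mu_k(\gamma)}$ were shown to be within $\bigo(1/k)$ of a constant, this would not control $\|\mu_k(\gamma)-\lambda_k\Id\|_{\op{op}}$. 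The statement that $\mu_k(\gamma)$ is $\bigo(1/k)$-close to a Toeplitz operator (hence close to a scalar once the symbol is close to a constant) is precisely the content the paper imports from \cite{CLF}, and your argument does not supply a substitute for it. If you want to pursue this route you would need to prove a quantitative rigidity statement of the form ``a unitary commuting with all $Q_k(f)$ up to $\bigo(1/k^2)$ is $\bigo(1/k)$-close to a scalar,'' which is plausible but not at all immediate.

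A secondary issue is your passage from the approximate phase cocycle $\tilde r_k$ to an exact homomorphism $r_k$. Your sketch of ``round on generators and extend by linearity'' is too vague: $\pi_1(\Ham(M,\omega))$ is abelian but need not be finitely generated, torsion relations require the phase to land exactly on a root of unity, and different relations may conflict under rounding. The paper never faces this problem because it exhibits the limit phase $r_k(\gamma)=kA(\gamma)+I(\gamma)\pmod{2\pi}$ directly as a sum of two genuine homomorphisms, so there is nothing to correct. If your structural step were repaired, you would still want to identify $r_k$ explicitly (or at least prove exactness of the limiting cocycle) rather than rely on an ad hoc rounding scheme.
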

\medskip
\noindent
The proof is given in Section \ref{sec-loop}. The homomorphisms $r_k$ will be
explicitly described in terms of action and Maslov invariants. The result
follows from \cite{CLF}, which is developed in the K\"ahler setting. But
there is no serious reason to think that the K\"ahler assumption is essential
here.

\medskip

Denote by $\mathbb{P}\mathbb{U}(\Hilb_k) = \mathbb{U}(\Hilb_k)/S^1$ the projectivization of the unitary group of the Hilbert space $\Hilb_k$. We equip this group with the quotient metric
$\delta_p([A],[B]) = \inf_\theta \|A- e^{i\theta}B\|_p$. Let us state an analogue of Definition \ref{def-appr} for projective representations.

\begin{defin}\label{def-appr-projective}{\rm
 A group $\Gamma$ is called {\it $p$-norm projectively approximated} if
there exists a family of maps
$$\rho_k: \Gamma \to \mathbb{P}\mathbb{U}(\Hilb_k)\;,$$
where $\Hilb_k$ is a sequence of Hilbert spaces of growing dimension,
such that
\begin{equation}\label{eq-almostrep-1-pr}
\lim \delta_p(\rho_k(x)\rho_k(y), \rho_k(xy)) = 0, \quad \forall x,y \in \Gamma\;,
\end{equation}
and
\begin{equation}\label{eq-almostrep-2-pr}
\liminf \delta_p(\rho_k(x),\Id) > 0, \quad \forall x \in \Gamma, x\neq 1\;.
\end{equation}
We call any sequence of maps $\rho_k$ satisfying \eqref{eq-almostrep-1-pr} {\it an asymptotic
projective representation} of $\Gamma$ in the sequence of unitary groups equipped with the $p$-norms.
}
\end{defin}

With this language,  the asymptotic unitary representation $\mu_k$ from Theorem \ref{thm-fine-1} descends to an asymptotic projective representation
$$\nu_k : \Ham(M,\omega) \to \mathbb{P}\mathbb{U}(\Hilb_k),\;\; \phi \mapsto [\mu_k({\tphi})]\;,$$
where $\tphi$ is any lift of $\phi$. Furthermore, every finitely presented subgroup of $\Ham(M,\omega)$ is $p$-norm projectively approximated. The proof is analogous to the one of Theorem \ref{thm-main-0}, with the only extra ingredient being explained in Remark \ref{rem-proj-vsp} below.

Write $\cP\cL\cO_p$ for the class of finitely presented groups which are {\bf not}
$p$-norm projectively approximated. We sum up the previous discussion in the following theorem, which is the main application of our quantization-based technique to group actions on symplectic manifolds.

\begin{thm}\label{thm-main}  Let $(M,\omega)$ be a closed K\"{a}hler manifold of dimension $2n$ with $[\omega]/(2\pi)$ being an integral class and $c_1(TM)$ taking even values on $\text{Ker}[\omega]$. Then every finitely presented subgroup of the group of Hamiltonian diffeomorphisms $\Ham(M,\omega)$ is $p$-norm projectively approximated with any $p > n$. In other words, groups from the class $\cP\cL\cO_p$, $p > n$ do not a admit a faithful Hamiltonian action on $(M,\omega)$.
\end{thm}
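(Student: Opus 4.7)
My plan is to invoke Theorem \ref{thm-fine-1} to equip $M$ with a fine K\"{a}hler quantization satisfying the loop formula \eqref{eq-loop}, then descend the asymptotic unitary representation $\mu_k$ of $\tHam(M,\om)$ to a projective one $\nu_k: \Ham(M,\om) \to \mathbb{P}\mathbb{U}(\Hilb_k)$, and verify Definition \ref{def-appr-projective} for its restriction to any finitely presented $\Gamma \subset \Ham(M,\om)$ with $p>n$. For the descent, given $\phi$ I pick any lift $\tphi$ and set $\nu_k(\phi) := [\mu_k(\tphi)]$; for another lift $\tphi' = \tphi\gamma$ with $\gamma \in \pi_1(\Ham(M,\om))$, Theorem \ref{thm-main-0}(ii) combined with \eqref{eq-loop} yields
\begin{equation*}
\mu_k(\tphi') = \mu_k(\tphi)\mu_k(\gamma) + \bigo(1/k) = e^{ir_k(\gamma)}\mu_k(\tphi) + \bigo(1/k),
\end{equation*}
so $\nu_k(\phi)$ is well defined up to $\bigo(1/k)$ in the quotient operator metric; this is the extra ingredient of Remark \ref{rem-proj-vsp}.

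For asymptotic multiplicativity, fix $\phi, \psi \in \Gamma$ with lifts $\tphi, \tpsi$. Since $\tphi\tpsi$ lifts $\phi\psi$, Theorem \ref{thm-main-0}(ii) gives $\|\mu_k(\tphi)\mu_k(\tpsi) - \mu_k(\tphi\tpsi)\|_{\op{op}} = \bigo(1/k)$, and inserting the Schatten conversion \eqref{eq-Sch} together with $\dim\Hilb_k = \bigo(k^n)$ from \eqref{eq-dim} upgrades this to
\begin{equation*}
\delta_p\bigl(\nu_k(\phi)\nu_k(\psi),\, \nu_k(\phi\psi)\bigr) = \bigo(k^{n/p - 1}),
\end{equation*}
which tends to $0$ precisely when $p > n$, establishing \eqref{eq-almostrep-1-pr}.

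The main obstacle is the lower bound \eqref{eq-almostrep-2-pr}: Theorem \ref{thm-main-0}(iii) controls $\|\mu_k(\tphi) - \Id\|_{\op{op}}$ but not $\inf_\theta \|\mu_k(\tphi) - e^{i\theta}\Id\|_p$, so I must rule out that $\mu_k(\tphi)$ concentrates near an arbitrary scalar. I plan to argue by contradiction: if $\mu_k(\tphi) = e^{i\theta_k}\Id + o(1)$ in operator norm along a subsequence, extract $\theta_k \to \theta_0$ by compactness and use the explicit action-Maslov description of $r_k$ announced after Theorem \ref{thm-fine-1} to construct loops $\gamma_k \in \pi_1(\Ham(M,\om))$ with $r_k(\gamma_k) \to -\theta_0$; the lift $\tphi\gamma_k$ of $\phi$ then satisfies $\mu_k(\tphi\gamma_k) \to \Id$, contradicting Theorem \ref{thm-main-0}(iii) applied to this new lift. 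Establishing the requisite equidistribution of $r_k(\pi_1(\Ham(M,\om)))$ in $\R/2\pi\Z$ is where the real work lies. A cleaner alternative sidesteps loops altogether: the identity $\min_\theta \|U - e^{i\theta}\Id\|_2^2 = 2(\dim\Hilb_k - |\op{tr}(U)|)$, together with a stationary-phase analysis of the Schwartz kernel of $\mu_k(\tphi)$ near the graph of $\phi$ showing $|\op{tr}\mu_k(\tphi)|/\dim\Hilb_k$ is bounded strictly below $1$ when $\phi \neq \Id$, and the interpolation $\|A\|_p \geq d^{1/p - 1/2}\|A\|_2$ for $p \geq 2$, forces $\delta_p(\nu_k(\phi), \Id) \gtrsim k^{n/p} \to \infty$, far stronger than required. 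Either route verifies Definition \ref{def-appr-projective} for $\nu_k|_\Gamma$ and the theorem follows.
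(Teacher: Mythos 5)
Your treatment of the descent and the asymptotic multiplicativity \eqref{eq-almostrep-1-pr} is correct and matches the paper: Theorem \ref{thm-fine-1} supplies the fine K\"ahler quantization and the scalar loop formula \eqref{eq-loop}, which together with Theorem \ref{thm-main-0}(ii) and the Schatten conversion \eqref{eq-Sch} with $\dim\Hilb_k = \bigo(k^n)$ give $\delta_p = \bigo(k^{n/p-1}) \to 0$ for $p > n$.

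However, you misidentify the lower bound \eqref{eq-almostrep-2-pr} as ``the main obstacle'' and then propose two routes that are neither completed nor needed. The key point you miss is Remark \ref{rem-proj-vsp}: the proof of Theorem \ref{thm-main-0}(iii) is \emph{manifestly invariant under multiplication of $U_k$ by an arbitrary phase}. The central inequality there is $\|U_k A U_k^* - A\|_{\op{op}} \le 2\|A\|_{\op{op}}\,\|\Id - U_k\|_{\op{op}}$ combined with the Egorov-based lower estimate $\|U_k A U_k^* - A\|_{\op{op}} = \|A\|_{\op{op}} + \bigo(1/k)$; the left-hand side does not change if $U_k$ is replaced by $e^{i\theta}U_k$, so the same computation gives $\|\Id - e^{i\theta}U_k\|_{\op{op}} \ge 1/2 + \bigo(1/k)$ uniformly in $\theta$. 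Then $\delta_p(\nu_k(\phi),\Id) = \inf_\theta\|\mu_k(\tphi) - e^{i\theta}\Id\|_p \ge \inf_\theta\|\mu_k(\tphi) - e^{i\theta}\Id\|_{\op{op}} \ge 1/2 + \bigo(1/k)$ by the left inequality in \eqref{eq-Sch}, and \eqref{eq-almostrep-2-pr} follows with no mention of $r_k$, equidistribution of loop actions, or trace asymptotics. Your first route (constructing loops $\gamma_k$ with $r_k(\gamma_k) \to -\theta_0$) requires an unproven equidistribution statement and is in fact false in general: $r_k(\pi_1(\Ham))$ need not fill up $\R/2\pi\Z$ (for instance $\pi_1(\Ham)$ may be trivial, as for surfaces of genus $\ge 2$, making the image a single point). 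Your second route relies on a ``stationary-phase analysis'' of $|\op{tr}\mu_k(\tphi)|/\dim\Hilb_k$ that you do not carry out; while plausibly true, it is a substantial extra input the argument simply does not need.
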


\begin{question} Can groups from the class $\cP\cL\cO_p$ act by volume-preserving
diffeomorphisms on closed manifolds?
\end{question}

\subsection{How to construct groups from $\cP\cL\cO_p$? (following \cite{DGLT,LO})}
De Chiffre, Glebsky, Lubotzky, and Thom \cite{DGLT} and Lubotzky and Oppenheim \cite{LO}
came up with a technique leading to examples of groups of the class $\cL\cO_p$ for $p \in (1,+\infty)$. It was explained to us by Lubotzky that the same method shows that these
groups lie in $\cP\cL\cO_p$. The argument from \cite {LO,DGLT} extends {\it verbatim}.
For reader's convenience we provide a brief outline of this argument adjusted to projective
case.

Fix a non-principal ultrafilter $\cU$, and consider the  ultra-product:
$$V_p:= \prod_{n \to \cU} (\text{Mat}(\C, k_n), ||\cdot||_p)\;.$$
Every asymptotic projective representation of $\Gamma$ yields a genuine isometric representation
$\pi_p$ of $\Gamma$ on $V_p$ by conjugation. The crux of the matter is that the action by conjugation is well defined since for $U_1 = e^{i\theta}U_2$, we have $U_1AU_1^* = U_2AU_2^*$.

Given a class of groups $\cP$, we say that a group $\Gamma$ is {\it residually $\cP$} if
for every element $x \in \Gamma \setminus {\Id}$ there exists a homomorphism
from $\Gamma$ to a group from $\cP$ whose kernel does not contain $x$. Interesting classes
of groups include {\it linear groups} (those, admitting a faithful finite-dimensional representations) and finite groups.

\begin{prop}[\cite{DGLT}] \label{prop-dglt}  Let $\Gamma$ be a finitely presented group with the following properties:
\begin{itemize}
\item[{(a)}] $H^2(\Gamma,\pi_p)=0$;
\item[{(b)}] $\Gamma$ is not residually linear.
\end{itemize}
Then $\Gamma \notin \cP\cL\cO_p$.
\end{prop}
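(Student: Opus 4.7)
The plan is to argue by contradiction, following the strategy of \cite{DGLT,LO}. Assume $\Gamma \notin \cP\cL\cO_p$, i.e.\ there is an asymptotic projective representation $\rho_k : \Gamma \to \mathbb{P}\mathbb{U}(\Hilb_k)$ satisfying \eqref{eq-almostrep-1-pr} and \eqref{eq-almostrep-2-pr}. Fix a finite presentation $\Gamma = \langle s_1, \ldots, s_m \mid w_1, \ldots, w_n \rangle$, and for each $k$ pick arbitrary unitary lifts $\sigma_k(s_i) \in \mathbb{U}(\Hilb_k)$ of $\rho_k(s_i)$, extended multiplicatively to a set-theoretic map $\sigma_k$ on the free group $F_m$. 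Asymptotic projective multiplicativity of $\rho_k$ in $p$-norm forces, for every pair of words $u,v \in F_m$ representing the same element of $\Gamma$, the element $\sigma_k(u)\sigma_k(v)^{-1}$ to be $o(1)$-close in $p$-norm to a scalar $e^{i\theta_{k,u,v}} \Id$; in particular each relator gives $\sigma_k(w_j) = e^{i\theta_{k,j}} \Id + \eta_{k,j}$ with $\|\eta_{k,j}\|_p \to 0$ along the ultrafilter $\cU$.

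The heart of the argument is a cohomological correction. Passing to the ultraproduct $V_p$ of the excerpt, the first-order (in the small scalar-valued defects) deviations from multiplicativity assemble into a well-defined $2$-cocycle $\beta \in Z^2(\Gamma, \pi_p)$, where $\pi_p$ is the conjugation $\Gamma$-module on $V_p$. Hypothesis (a) gives $\beta = d\alpha$ for some $1$-cochain $\alpha : \Gamma \to V_p$; represented by sequences $\alpha_k(s_i) \in \cL(\Hilb_k)$ with $\|\alpha_k(s_i)\|_p \to 0$, these corrections turn $\sigma_k$ into a genuine unitary representation $\sigma_k' : \Gamma \to \mathbb{U}(\Hilb_k)$ for $\cU$-almost every $k$. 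The scalar ambiguity intrinsic to the projective lift is absorbed automatically, because conjugation factors through $\mathbb{P}\mathbb{U}$ by the identity $U_1 A U_1^* = U_2 A U_2^*$ when $U_1 = e^{i\theta} U_2$ highlighted in the excerpt; no independent $H^2(\Gamma, S^1)$-class appears outside the one already killed by (a).

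For any $\gamma \in \Gamma \setminus \{1\}$, the projective separation $\liminf \delta_p(\rho_k(\gamma), \Id) > 0$ together with $\|\sigma_k'(\gamma) - \sigma_k(\gamma)\|_p \to 0$ implies $\sigma_k'(\gamma) \neq \Id$ for $\cU$-many $k$. Each $\sigma_k'$ is a finite-dimensional unitary, hence linear, representation of $\Gamma$, so for every nontrivial $\gamma$ some $\sigma_k'$ has $\gamma$ outside its kernel: this is exactly residual linearity of $\Gamma$, contradicting (b). The main obstacle is the cohomological correction: one must implement the vanishing of $H^2(\Gamma, \pi_p)$ \emph{quantitatively} inside the ultraproduct so that the coboundary representative has $p$-norm tending to zero, and one must set up the coefficient module carefully so that the projective (rather than unitary) nature of $\rho_k$ does not contribute an $S^1$-valued cocycle outside what (a) annihilates. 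Once these are in place, the remainder of the argument extends \emph{verbatim} from the non-projective case treated in \cite{DGLT,LO}, as the authors note.
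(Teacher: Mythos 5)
Your proof matches the paper's own sketch closely: fix lifts on generators, pass to the ultraproduct $V_p$, extract a $2$-cocycle from the relator defects, kill it using hypothesis (a), and deduce residual linearity contradicting (b). The paper itself only gives a schematic outline of the Newton-type process and defers the details to \cite{DGLT,LO}, so your added description of the first-order cocycle is appropriate.

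Two remarks. First, a bookkeeping point concerning the statement itself: as written the conclusion reads $\Gamma \notin \cP\cL\cO_p$, but the section title (\emph{``How to construct groups from $\cP\cL\cO_p$?"}), the surrounding discussion, and the paper's own sketch all make clear that the intended conclusion is $\Gamma \in \cP\cL\cO_p$ (the hypotheses, including ``not residually linear", obstruct existence of any asymptotic projective representation). Your proof proves this intended conclusion, so you silently corrected the sign.

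Second, a genuine imprecision in your writeup: you assert that the cohomological correction turns $\sigma_k$ into ``a genuine unitary representation $\sigma_k' : \Gamma \to \mathbb{U}(\Hilb_k)$". That is stronger than what the DGLT argument delivers in the projective setting, and for a structural reason: in $V_p = \prod_{n \to \cU}(\text{Mat}(\C,k_n), \|\cdot\|_p)$ the constant sequence $(\Id)_n$ has $p$-norm $k_n^{1/p} \to \infty$, so scalars do not live in $V_p$. Consequently $H^2(\Gamma,\pi_p)=0$ does not see, and hence cannot annihilate, the $S^1$-valued (Schur-multiplier) part of the relator defects; your remark that ``no independent $H^2(\Gamma,S^1)$-class appears outside the one already killed by (a)" is not justified. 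What the Newton process actually produces (and what the paper says) is a genuine \emph{conjugation} representation of $\Gamma$ on $\text{Mat}(\C,k_n)$ for $\cU$-almost every $n$, i.e.\ a genuine projective representation. That is precisely why the ultraproduct module $\pi_p$ is defined via conjugation and why the identity $U_1 A U_1^* = U_2 A U_2^*$ for $U_1 = e^{i\theta}U_2$ is ``the crux of the matter": it makes both the module and the output of the correction insensitive to the scalar ambiguity you cannot remove. The conjugation representation is linear, finite-dimensional, and separates $\gamma$ whenever the corrected unitary at $\gamma$ is not a scalar, which the projective separation hypothesis guarantees. So the final step of your argument still goes through, but should invoke $\text{Ad}\,\sigma_k'$ on $\text{Mat}(\C,k_n)$ rather than $\sigma_k'$ on $\Hilb_k$.
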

Indeed, assumption (a) enables one to apply a Newton-type process which yields a genuine
representation of $\Gamma$ on $\text{Mat}(\C, k_n)$ for almost all $n$ with respect to the ultra-filter. Moreover, every $x \neq \Id$ does not lie in its kernel for almost all $n$.
But this contradicts assumption (b).

The group $\Gamma$ is constructed in two steps:
\begin{itemize}
\item [{(i)}] Take a cocompact lattice $\Gamma_0$ in a simple Lie group $G$
of rank $\geq 3$ over $\ell$-adic numbers with $\ell$ sufficiently large;
\item [{(ii)}] Take a special finite central extension $\Gamma$ of $\Gamma_0$
which is not residually finite (Deligne).
\end{itemize}

The paper \cite{DGLT} proposes a specific example of the lattice $\Gamma_0$,
$$\Gamma_0 = \mathbb{U}(2n) \cap Sp(2n,\Z[\sqrt{-1},1/p])\;$$
considered as a cocompact lattice in $Sp(2n,\Q_p)$.

The central extension $\Gamma \to \Gamma_0$, based on technique of Deligne, is quite complicated,
and we refer to \cite{DGLT} for details.

\medskip

In order to verify assumption (a) of Proposition \ref{prop-dglt}, the following features are used: first, the Lie group $G$
acts on a special simplicial complex (a Bruhat-Tits building); here one uses the $\ell$-adic nature of
the situation. Second, the representation $\pi_p$ is a particular case of an isometric
representation on Banach spaces from a special class: they are obtained from Pisier's $\theta$-Hilbertian spaces (where $\theta$ depends on $p$) by using quotients, $l_2$-sums and ultra-products.

For verifying assumption (b) of Proposition \ref{prop-dglt}, one uses the (immediate consequence of) Malcev Theorem: any residually linear group is residually finite. This completes our outline of the argument from \cite{DGLT,LO}.

\subsection{Stability}
Another application of Theorem \ref{thm-main-0} deals with the following stability
question: given a subgroup $\Gamma \subset \tHam(M,\omega)$, is its quantization
$\mu_k|_\Gamma: \Gamma \to \mathbb{U}(\Hilb_k)$ close to a genuine representation?
It follows that the answer is affirmative for the class of $p$-norm stable groups
defined as follows \cite{LO,DGLT}. Here we include the case $p = \infty$, i.e. of the operator norm. Let $\Gamma$ be a finitely presented group defined by finite collections of generators $S$ and relations $R$, considered as subsets of the free group $\mathbb{F}_S$
generated by $S$. The {\it $p$-norm stability} means that for every $\epsilon >0$ there exists $\delta >0$ such that for every finite-dimensional Hilbert space $\Hilb$ and every homomorphism
$t: \mathbb{F}_S \to \mathbb{U}(\Hilb)$ with $$\max_{r \in R} \|t (r)-\Id\|_p \leq \delta\;,$$ there exists a homomorphism $\rho: \Gamma \to \mathbb{U}(\Hilb)$ whose lift $\overline{\rho}: \mathbb{F}_S \to \mathbb{U}(\Hilb)$ satisfies
$$\max_{s \in S} \|t(s) - \overline{\rho}(s)\|_p < \epsilon\;.$$
Let us mention that all finite groups are operator norm stable by  \cite{Grove, K}.

\medskip
\noindent
\begin{cor}\label{cor-LO-1} Assume that a $2n$-dimensional closed symplectic manifold $M$ admits a fine quantization.  Let $\Gamma = \langle S|R\rangle \subset \tHam(M,\omega)$ be
a finitely presented $p$-norm stable subgroup, where $p >n$. There exists a family
of homomorphisms $\rho_k: \Gamma \to \mathbb{U}(\Hilb_k)$ such that
$$\max_{s \in S} \|\mu_k(s) - \rho_k(s)\|_p \to 0,\;\; k \to \infty\;.$$
\end{cor}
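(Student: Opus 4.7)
The plan is to reduce the statement to the very definition of $p$-norm stability, applied to the almost-homomorphism obtained by extending $\mu_k|_S$ multiplicatively to the free group $\mathbb{F}_S$. Explicitly, define $t_k: \mathbb{F}_S \to \mathbb{U}(\Hilb_k)$ to be the unique homomorphism satisfying $t_k(s) = \mu_k(s)$ for all $s \in S$. The heart of the argument is the estimate
\[
\max_{r \in R}\|t_k(r) - \Id\|_p \longrightarrow 0, \quad k \to \infty,
\]
after which the $p$-norm stability of $\Gamma$ directly produces the required homomorphisms $\rho_k$.

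To establish this I first work in the operator norm. By Theorem \ref{thm-main-0}(ii) applied to the pair $\tphi, \tphi^{-1}$, together with $\mu_k(\Id) = \Id$, inverses are controlled: $\|\mu_k(\tphi^{-1}) - \mu_k(\tphi)^{-1}\|_{\op{op}} = \bigo(1/k)$. Iterating Theorem \ref{thm-main-0}(ii) along a word $w = s_{i_1}^{\eps_1} \cdots s_{i_m}^{\eps_m} \in \mathbb{F}_S$ with image $\bar{w} \in \Gamma \subset \tHam(M,\omega)$, and using that all intermediate operators are unitaries (so the triangle inequality telescopes cleanly), I obtain $\|t_k(w) - \mu_k(\bar{w})\|_{\op{op}} \leq C_m/k$ for a constant depending only on the word length $m$. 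Since $R$ is finite and every $r \in R$ satisfies $\bar{r} = \Id$ in $\Gamma$, this yields $\max_{r \in R}\|t_k(r) - \Id\|_{\op{op}} = \bigo(1/k)$.

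Conversion to the $p$-norm uses \eqref{eq-Sch} together with $\dim \Hilb_k = \bigo(k^n)$ from \eqref{eq-dim}, giving $\|t_k(r) - \Id\|_p \leq (\dim \Hilb_k)^{1/p}\|t_k(r) - \Id\|_{\op{op}} = \bigo(k^{n/p - 1})$, which tends to $0$ precisely because $p > n$. Now, for any $\eps > 0$, let $\delta(\eps) > 0$ be the constant supplied by the $p$-norm stability of $\Gamma$; for $k$ large enough the defect on relations is below $\delta(\eps)$, producing a homomorphism $\rho_k^{(\eps)}: \Gamma \to \mathbb{U}(\Hilb_k)$ with $\max_{s \in S}\|\mu_k(s) - \rho_k^{(\eps)}(s)\|_p < \eps$. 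A standard diagonal argument, letting $\eps = 1/j$ and pasting together the resulting families over an increasing sequence of indices, glues these into a single sequence $\rho_k$ whose error goes to $0$.

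The main obstacle is the iteration step: one must confirm that the $\bigo(1/k)$ operator-norm defects of multiplication genuinely compound into $\bigo(1/k)$ rather than blowing up. This works because $R$ is finite (hence word lengths are uniformly bounded), because intermediate operators have operator norm $1$, and because the exponent loss $k^{n/p}$ coming from passing to the $p$-norm is killed by the hypothesis $p > n$. Once the defect on relations is shown to be $o(1)$ in the $p$-norm, the stability hypothesis is invoked as a black box and the conclusion follows by a routine diagonal extraction.
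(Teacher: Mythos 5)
Your proof is correct and follows the argument the paper leaves implicit: extend $\mu_k$ multiplicatively over the free group, iterate Theorem \ref{thm-main-0}(ii) over the finitely many relations to get an $\bigo(1/k)$ operator-norm defect, pass to the $p$-norm via \eqref{eq-Sch} and \eqref{eq-dim} using $p>n$, and invoke the stability definition together with a diagonal extraction. The paper presents the corollary as an immediate consequence of the theorem and the stability definition, and your write-up supplies exactly those details.
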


\medskip
\noindent\begin{rem}{\rm Some examples of finite subgroups of $\tHam(M,\omega)$ come from the following construction. Let $F \subset \Ham(M,\omega)$ be a finite group acting in a Hamiltonian way on a closed quantizable symplectic manifold  $(M,\omega)$. For instance, any unitary representation
of $F$ on a finite-dimensional complex Hilbert space $V$ yields an action of $F$
on the projectivization $\mathbb{P} (V)$. Denote by $\widetilde{F} \subset \tHam(M,\omega)$
as the full lift of $F$. If $F$ is perfect, there exists a finite abelian
extension $G$ of $F$, called the universal extension \cite{Rosenberg} such that the
following diagram commutes:
\[\xymatrix{
G \ar[d] \ar[r] &F\ar[d]^\Id \\ \widetilde{F} \ar[r]^\tau &F
} \]
This provides a monomorphism of $G$ into $\tHam(M,\omega)$.
}
\end{rem}

Let us note also that for any finite subgroup $F \subset \Ham(M,\omega)$, the restriction
of the asymptotic projective representation $\nu_k$, which we constructed for quantizable K\"{a}hler manifolds satisfying condition (C), the restriction $\nu_k |_F$ is close to a genuine projective representation, see \cite{Grove}.

\subsection{Bibliographical and historical remarks}
A few bibliographical remarks are in order. For K\"{a}hler quantization with
metaplectic correction an asymptotic representation of the quantomorphisms group
of a prequantum circle bundle over a closed symplectic manifold is constructed by Charles in
\cite{oim_eq}. In the present paper we generalize this result in two directions: first,
we prove it for arbitrary fine quantizations, and second, for K\"{a}hler quantization,
we impose Condition (C) instead of the assumption that the canonical bundle admits a
square root.

Charles showed that quantization enables one to reconstruct Shelukhin's quasi-morphism on $\tHam(M,\omega)$ \cite{C-q}. Ioos, Kazhdan and Polterovich explored a link between quantization and almost representations of Lie algebras \cite{IKP}.

Constraints on smooth actions of finitely presented groups on closed manifolds
is a classical and still rapidly developing subject. Its highlight is Zimmer's famous conjecture \cite{Zimmer} which, roughly speaking, states that higher rank lattices
in semisimple Lie groups cannot act on manifolds of sufficiently small dimension.
This conjecture was recently resolved in a breakthrough work by Brown, Fisher, and Hurtado
\cite{BFH}. Some results on Hamiltonian actions were obtained by Polterovich, Franks and Handel. We refer to Fisher's survey in \cite{Zimmer} for a more detailed discussion. It would
be interesting to explore potential actions of the group constructed in \cite{DGLT, LO} and described above, which is a finite extension of a higher rank $\ell$-adic lattice with sufficiently high $\ell$, along the lines of \cite{BFH}. As we have learned from David Fisher,
this problem is at the moment open. Furthermore, Fisher conjectured existence of constraints
on actions of such groups.

\section{Constructing fine quantizations}  \label{sec:an-impr-corr}
In this section we prove Theorem \ref{thm-fine}  by constructing a fine quantization,
which will be denoted by $\Op_k$.

In the usual Toeplitz-K\"ahler quantization, we consider a compact K\"ahler manifold
$(M,\om)$ equipped with a holomorphic Hermitian line bundle $L$ whose Chern
connection has curvature $\frac{1}{i} \om$. The quantum space is defined as
the space $\Hilb_k$ of holomorphic sections of $L^k\otimes L'$, where $L' $ is
an auxilliary  Hermitian holomorphic line bundle. Here, the parameter $k$
is a positive integer. The large $k$ limit is the {\em semiclassical limit} where
in first approximation the quantum mechanics reduces to the classical
mechanics of $M$ considered as the classical phase space. In this context, a
standard way to define a quantum observable from a classical one is the {\em Berezin-Toeplitz} quantization:  for any $f \in \Ci (M, \R)$, we let $T_k (f)$ be the endomorphism of $\Hilb_k$ such that
\begin{gather} \label{eq:def_Toep}
  \langle T_k (f) \psi, \psi' \rangle = \langle f \psi, \psi' \rangle
\end{gather}
for any
$\psi, \psi' \in \Hilb_k$. Here the scalar product of $\Ci ( M, L^k \otimes L')$ is  given by
integrating the pointwise scalar product against the Liouville volume form.

The basic properties of these operators are the following equalities which
holds for any $f,g \in \Ci (M)$
\begin{xalignat}{2} \label{eq:old}
\begin{split}
  T_k  (
  fg) & = T_k ( f) T_k (g) + \bigo ( k^{-1})  \\
  [T_k
  (f) , T_k  (g) ] & = (ik)^{-1} T_k ( \{ f,g \} ) + \bigo (
  k^{-2}) \\
  \op{tr}
  (T_k ( f)) & = \Bigl(\frac{k}{2\pi}\Bigr)^n \int _M f \mu + \bigo ( k^{n-1})
\end{split}
\end{xalignat}
see \cite{BoGu}, \cite{BoMeSc}.
Furthermore $\| T_k(f) \|_{\op{op}} = \| f\| + \bigo (k^{-1})$. Observe that in the
bracket correspondence (second line of \eqref{eq:old}), the remainder is a $\bigo (k^{-2})$, so we miss the
fine quantization condition given in Definition \ref{def:fine_quantization}.

The first order correction to \eqref{eq:old} have been computed in \cite{oim_eq}. Introduce
for any $f \in \Ci (M)$, the operator
\begin{gather} \label{eq:def_op_k}
\Op_k  ( f) := T_k ( f - (2 k )^{-1} \Delta f )
\end{gather}
where
$\Delta$ is the holomorphic Laplacian of $M$ (in complex coordinates $\Delta
f = \sum G^{ij} \partial_{z_i} \partial_{\con{z}_j}$ with $(G^{ij})$ the
inverse of $(G_{ij})$ given by $\om = i \sum G_{ij} dz_i \wedge dz_j$). Since $\op{Op}_k ( f ) = T_k (f) + \bigo ( k^{-1})$, the operators
$\op{Op}_k (f)$ satisfy \eqref{eq:old} as well. The novelty is that we have now
some explicit formulas for the first corrections
\begin{xalignat}{2} \label{eq:new}
  \begin{split}
 \Op_k (f ) \Op _k ( g )  & =     \Op_k ( fg )   + \frac{i}{2k}  \Op_k ( \{ f, g \})
 + \bigo ( k^{-2})  \\
 [ \Op_k ( f), \Op_k (g) ] & = (i k)^{-1} \Op_k ( \{ f,g\} - k^{-1} \om_1 ( X_f, X_g) ) + \bigo ( k^{-3})\\
\op{tr}( \Op_k ( f) ) & = \Bigl( \frac{k}{2 \pi} \Bigr) ^n \int_M f ( \om +
k^{-1} \om_1)^n/ n! + \bigo( k^{-2})
\end{split}
\end{xalignat}
see \cite{oim_eq}. Here $\om_1 = i  ( \Theta' -  \frac{1}{2} \Theta_K) $ where $\Theta'$ and $\Theta_K$ are the
Chern curvature of $L'$ and the canonical bundle $K$ respectively. In complex
coordinates as above,  $\Theta_K = \partial \con{\partial} \ln \det
( G_{ij} )$

In the case where $M$ has a metaplectic structure, one can choose for $L'$ a
square root of the canonical bundle, so that $\om_1 =0$ and we get our fine
quantization. More generally, to prove the existence of fine quantizations
under assumption $(\op{C})$, we construct a convenient auxiliary bundle $L'$.

\begin{lemma} Assume condition $(\op{C})$.  Then there exists a holomorphic
  Hermitian line bundle $L'$ such that $\om_1 = \la \om $  with $\la \in \Q$.
\end{lemma}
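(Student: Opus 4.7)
The plan is to select the topological type of $L'$ using condition $(\op{C})$, then tune the Hermitian metric to upgrade a cohomological identity to a pointwise one.

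First, I apply $(\op{C1})$ to fix $\mu \in \R$ and an integral class $\al \in H^2(M,\Z)/\text{torsion}$ with $\frac{1}{2} c_1(TM) - \mu [\om] = \al$ in $H^2(M,\R)$. To pin down the size of $\mu$, I rewrite this as $4\pi \mu \, c_1(L) = c_1(TM) - 2\al$, using $[\om] = 2\pi c_1(L)$. The right-hand side is an integral class; since $c_1(L)$ is nonzero in $H^2(M,\Q)$ (its top power equals $(2\pi)^{-n}[\om]^n$, which does not vanish), a linear-algebra argument in $H^2(M,\Q)$ forces $4\pi \mu \in \Q$. I set $\la := 2\pi \mu$, which is rational.

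Second, I realize $-\al$ as the first Chern class of a holomorphic line bundle. The class $-\al = \mu[\om] - \frac{1}{2} c_1(TM)$ is integral by condition $(\op{C})$ and of type $(1,1)$: on the K\"ahler manifold $M$, both $[\om]$ and $c_1(TM)$ have trivial $(2,0)$-component in the Hodge decomposition. The Lefschetz theorem on $(1,1)$-classes then produces a holomorphic line bundle $L'$ with $c_1(L') = -\al$. Equipping $L'$ with any smooth Hermitian metric $h$, and using the convention $c_1 = [\tfrac{i}{2\pi} \Theta]$ so that $[i\Theta'] = 2\pi c_1(L')$ and $[i\Theta_K] = 2\pi c_1(K) = -2\pi c_1(TM)$, I compute
\[
[\om_1] \;=\; 2\pi\, c_1(L') + \pi\, c_1(TM) \;=\; -2\pi \al + \pi\, c_1(TM) \;=\; 2\pi \mu\,[\om] \;=\; \la\, [\om].
\]

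Third, to upgrade this cohomological equality to the pointwise identity $\om_1 = \la\, \om$, I invoke the global $\partial \con{\partial}$-lemma on the compact K\"ahler manifold $M$: the real exact $(1,1)$-form $\om_1 - \la\, \om$ may be written as $i\, \partial \con{\partial} \phi$ for a smooth real function $\phi$. Replacing $h$ by $e^{\phi} h$ shifts the Chern curvature $\Theta'$ by $-\partial \con{\partial} \phi$, and hence modifies $\om_1$ by $-i\, \partial \con{\partial} \phi$, absorbing the discrepancy and yielding $\om_1 = \la\, \om$ as forms. The main subtleties to monitor are (i) the rationality of $\la$, which rests on $c_1(L)$ being nontorsion (forced by $[\om]^n \neq 0$), and (ii) the existence of $L'$, which genuinely uses the K\"ahler assumption to pass from an integral $(1,1)$-class to an actual holomorphic line bundle.
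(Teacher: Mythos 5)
Your proof is correct, but it takes a genuinely different route from the paper's. To produce the holomorphic Hermitian bundle $L'$ with the required curvature, the paper argues more ``by hand'': having shown $\la = p/q$ rational, it observes that $(L')^{2q}$ and $K^q \otimes L^{2p}$ differ by a torsion line bundle $T$ (with $T^m$ trivial), endows $T$ with the flat holomorphic Hermitian structure, and then transfers holomorphic and Hermitian structures from the $2q$-th power back to $L'$ by root extraction. Because the curvature of a root is the corresponding fraction of the curvature of the power, this gives $\Theta' = \frac{1}{2}\Theta_K + \la\Theta$ \emph{pointwise} in one stroke, with no need to adjust the metric afterwards. You instead invoke the Lefschetz $(1,1)$-theorem to find a holomorphic $L'$ with the prescribed real Chern class, and then use the global $\partial\bar\partial$-lemma to conformally rescale the Hermitian metric and upgrade the cohomological identity $[\om_1]=\la[\om]$ to the pointwise one. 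Both arguments are valid in the K\"ahler setting; the rationality argument for $\la$ (using $c_1(L)\neq 0$ in $H^2(M,\Q)$) is essentially the same in both. The paper's root trick is more elementary and self-contained and bypasses Lefschetz $(1,1)$ and the $\partial\bar\partial$-lemma entirely; your version is perhaps more conceptually transparent in separating the topological choice of $L'$ from the analytic choice of metric, at the cost of invoking two classical K\"ahler theorems. One small point worth noting: condition (C1) is phrased in $H^2(M,\Z)/\text{torsion}$, so when applying Lefschetz $(1,1)$ you should briefly note that any integral lift of $-\alpha$ to $H^2(M,\Z)$ works, since only the real Chern class enters the curvature computation.
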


\begin{proof}
The basic observation we need is that for any line bundle $D$ and integer $m$ such that $D^m$
is equipped with a Hermitian and holomorphic structures,
$D$ has natural holomorphic and Hermitian structures inducing the ones of
$D^m$.  Furthermore the Chern curvature of $D$ is $1/m$ times the Chern
curvature of $D^m$.

Now, the assumption that $\frac{1}{2}c_1^{\R} (K) + \R [\om]$
  intersects the lattice of integral classes means that there exists a line
  bundle $L'$ such that $c_1 ^{\R} (L') =  \frac{1}{2} c_1 ^{\R} ( K ) + \la
  c_1^{\R} ( L)$. Since $c_1^{\R} (L) \neq 0$, $\la = p/q$ is rational. So
  $(L' )^{2q} = K^q \otimes L^{2 p}\otimes T$ where $T$ is a torsion line
  bundle, i.e. $T^m = 1$ for some $m
  \in \N$. We endow $T$ with the holomorphic and Hermitian structures such
  that $T^m$ becomes the trivial Hermitian and holomorphic line bundle, so
 that the Chern curvature of $T$ is zero. Then we endow $L'$ with the
 Hermitian and holomorphic structure compatible with the isomorphism $(L') ^{2q}
 = K^q \otimes L^{2 p}\otimes T$. So the Chern curvature $\Theta'$, $\Theta$
 and $\Theta_K$ of $L'$, $L$ and $K$ satisfy $\Theta' = \frac{1}{2} \Theta_K +
 \la \Theta$. So $\om_1 = i \la \Theta = \la \om$.
\end{proof}

In the case where $\om_1 = \la \om$, the second and third equations of
\eqref{eq:new} reads
\begin{gather} \label{eq:new_corresp_hbar}
\begin{split}
 [ \Op_k ( f), \Op_k (g) ] = (i(k+ \la))^{-1}  \Op_k ( \{ f,g\} ) + \bigo ( k^{-3})\\
\op{tr}( \Op_k ( f) ) = \Bigl( \frac{k+ \la}{2 \pi} \Bigr)^{n} \int_M f   \mu  + \bigo( k^{n-2})
\end{split}
\end{gather}
which proves Theorem \ref{thm-fine} for a K\"ahler manifold with $\hbar_k =
(k+ \la)^{-1}$.

Let us generalize
this to symplectic manifolds.
So we start
with a symplectic compact manifold $(M, \om)$ such that $\frac{1}{2\pi} [\om]$
is integral. We introduce a Hermitian line bundle $L$ with
Chern class $\frac{1}{2\pi} [\om]$ and a second Hermitian line bundle $L'$.
We denote by $\Om_1 \in H^2 (M, \R)$ the cohomology class $$ \Om_1  =  \tfrac{1}{2 \pi} \bigl( c_1^{\R} ( L') - \tfrac{1}{2} c_1 ^{\R} ( K) \bigr).$$
Here, the canonical bundle $K$ is defined through any almost complex structure
compatible with $\om$. It is well known that the Chern class of $K$ only
depends on $\om$. If
$\Hilb_k$ is a finite dimensional subspace of $\Ci ( M, L^k \otimes
L')$, we can define as before the Toeplitz operators $T _k (f)$ by \eqref{eq:def_Toep}. Then
we have the following results:
\begin{enumerate}
  \item by \cite{oim_symp}, cf. also \cite{BoGu}, \cite{MaMa}, one can choose the family $(\Hilb_k )$ so that the operators $T_k  (f)$ satisfy \eqref{eq:old}.
  \item by \cite{oim_sub}, there exists a real differential operator $P : \Ci
    (M) \rightarrow \Ci (M)$ such that $\Op_k(f) = T_k ( f)  +
    k^{-1}T_k ( Pf)$ satisfies \eqref{eq:new} with $\om_1$ a
    representative of $\Om_1$. Furthermore, by adding to $P$ a vector field, one modifies $\om_1$ by an exact form. Choosing conveniently this vector field, we can obtain any representative of $\Om_1$.
  \end{enumerate}
If condition $(\op{C})$ holds, we can choose $L'$ so that $\Om_1 = \la [\om]$ for some $\la \in \Q$. Choosing
$P$ so that $\om_1 = \la \om$, we obtain equations \eqref{eq:new_corresp_hbar}.

\section{Quantum dynamics}

\subsection{The Egorov theorem for fine quantizations}
We start with the Egorov theorem for fine quantizations.
Let $f_t$ be a classical Hamiltonian generating the Hamiltonian flow $\phi_t$, and let $U_k(t)$ be the corresponding quantum evolution.

\begin{thm} \label{thm-Egorov}
For every function $g \in C^{\infty}(M)$
\begin{equation}\label{eq-Egorov}
\|Q_k(g \circ \phi^{-1}) - U_k Q_k(g) U_k^{-1}\|_{\op{op}} = O (\frac{1}{k^2})\;,
\end{equation}
where the remainder depends on $f$ and $g$.
\end{thm}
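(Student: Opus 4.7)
The plan is to run the standard Heisenberg-picture Duhamel argument, but to track remainders carefully in order to exploit the $\bigo(1/k^3)$ improvement in (P2) guaranteed by fineness.

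First I would introduce, for $t \in [0,1]$, the quantum and classical evolutions
\begin{equation*}
A_k(t) := U_k(t)\, Q_k(g)\, U_k(t)^{-1}, \qquad B_k(t) := Q_k(g \circ \phi_t^{-1}),
\end{equation*}
so that $A_k(0) = B_k(0) = Q_k(g)$ and the quantity to estimate is $\|A_k(1) - B_k(1)\|_{\op{op}}$. Differentiating $A_k$ using the Schr\"odinger equation \eqref{eq-Schr} gives
\begin{equation*}
\dot{A}_k(t) = -\tfrac{i}{\hbar_k}\bigl[Q_k(f_t),\, A_k(t)\bigr].
\end{equation*}
On the classical side, setting $g_t := g \circ \phi_t^{-1}$ and differentiating the identity $g_t \circ \phi_t = g$ yields $\dot{g}_t = \{g_t, f_t\}$, so by $\R$-linearity of $Q_k$,
\begin{equation*}
\dot{B}_k(t) = Q_k\bigl(\{g_t, f_t\}\bigr).
\end{equation*}

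Next I would compare the two evolutions on $B_k(t)$ itself. Applying the bracket correspondence (P2) to the pair $(f_t, g_t)$ and recalling that $1/\hbar_k = \bigo(k)$ gives
\begin{equation*}
-\tfrac{i}{\hbar_k}\bigl[Q_k(f_t),\, B_k(t)\bigr] = Q_k\bigl(\{g_t, f_t\}\bigr) + \bigo\bigl(1/(\hbar_k k^3)\bigr) = \dot{B}_k(t) + \bigo(1/k^2),
\end{equation*}
uniformly in $t \in [0,1]$, with constants depending on finitely many derivatives of $f$ and $g$ over the compact time interval. Consequently $D_k(t) := A_k(t) - B_k(t)$ satisfies the linear operator ODE
\begin{equation*}
\dot{D}_k(t) = -\tfrac{i}{\hbar_k}\bigl[Q_k(f_t),\, D_k(t)\bigr] + R_k(t), \qquad D_k(0) = 0,
\end{equation*}
with $\|R_k(t)\|_{\op{op}} = \bigo(1/k^2)$ uniformly on $[0,1]$.

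Finally, the homogeneous commutator part is killed by conjugating with $U_k$: set $E_k(t) := U_k(t)^{-1} D_k(t) U_k(t)$. A direct differentiation shows that the commutator contributions cancel and one is left with $\dot{E}_k(t) = U_k(t)^{-1} R_k(t) U_k(t)$, so by unitary invariance $\|\dot{E}_k(t)\|_{\op{op}} = \bigo(1/k^2)$. Integrating from $0$ to $1$ gives $\|E_k(1)\|_{\op{op}} = \bigo(1/k^2)$, and since $\|D_k(1)\|_{\op{op}} = \|E_k(1)\|_{\op{op}}$ this is the desired estimate. The only real obstacle is conceptual rather than technical: with the usual $\bigo(1/k^2)$ remainder in the bracket correspondence, the same argument would only yield a $\bigo(1/k)$ Egorov estimate. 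The cubic remainder built into the definition of a fine quantization is exactly what is needed to absorb the factor $1/\hbar_k \sim k$ coming from the Schr\"odinger equation and land at $\bigo(1/k^2)$.
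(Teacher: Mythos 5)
Your proof is correct and is essentially the same Duhamel/interaction-picture argument as the paper's: the paper differentiates the single family $B(s) = U(s,1)\,Q_k(g\circ\phi_s^{-1})\,U(1,s)$, which differs from your $E_k(s)$ only by a constant shift by $Q_k(g)$ and conjugation by the fixed unitary $U_k(1)$, so $\|dB/ds\|_{\op{op}}=\|\dot E_k(s)\|_{\op{op}}$ and the two computations coincide. In both the crux is identical — the $\hbar_k^{-1}\sim k$ from Schr\"odinger multiplies the $\bigo(1/k^3)$ remainder of (P2) to give $\bigo(1/k^2)$, with uniformity in $t$ coming from the $C^\infty$-compactness of the family $\{f_t,\,g\circ\phi_t^{-1}\}$.
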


This formula readily follows from  \cite[Proposition 2.7.1]{L}.
Let us emphasize that the quantum map $U_k$ depends
on the Hamiltonian $f$ generating the diffeomorphism $\phi$.  This dependence will be analyzed later.

\medskip
\noindent
{\bf Proof of the Egorov theorem \eqref{eq-Egorov}:}

Recall that if $\phi_t$ is the Hamiltonian flow generated by a time-dependent
Hamiltonian $f_t(x)$, the flow $\phi_t^{-1}$ is generated by $\bar{f}_t:= -f_t \circ \phi_t$.
It follows that for any function $g \in C^{\infty}(M)$
\begin{equation}\label{eq-evolution}
\frac{d}{dt} g\circ \phi^{-t} = (\phi^{-t})^* (L_{\sgrad \bar{f}_t} g)= (\phi^{-t})^*\{\bar{f}_t,g\}=
-\{f_t,g\circ \phi^{-t}\}\;.
\end{equation}

Next, turn to the analysis of the Schr\"odinger equation
$\dot{\xi}= -\frac{i}{\hbar_k} Q_k(f_t)\xi$.
Introduce the family of unitary operators
$$U(s,t): \Hilb_k \to \Hilb_k\;, \xi(s) \mapsto \xi(t)$$ which sends
the solution at time $s$ to the solution at time $t$.
Observe that $U(0,t)=U_k(t)$ is the Schr\"odinger  evolution,
$U(t,t) = \Id$ and $U(s,t) = U(t,s)^{-1}=U(t,s)^*$. The Schr\"odinger equation yields
\begin{equation} \label{eq-schr-deriv}
\frac{\partial}{\partial s} U(t,s) = -\frac{i}{\hbar_k}Q_k(f_s) U(t,s)\;, \frac{\partial}{\partial s} U(s,t) =-\frac{i}{\hbar_k} U(s,t)Q_k(f_s)\;.
\end{equation}

Put now $B(s) := U(s,1)Q_k(g \circ \phi_s^{-1})U(1,s)$, so that
$B(0) = U_kQ_k(g )U_k={-1}$ and $B(1)=Q_k(g\circ  \phi_1^{-1} )$.
From \eqref{eq-evolution} and \eqref{eq-schr-deriv}
we get that
$$\frac{dB}{ds} = U(s,1)\left( \frac{i}{\hbar_k} [Q_k(f_s),Q_k(g \circ \phi_s^{-1})]- Q_k(\{f_s,g\circ \phi^{-s}\}\right)U(1,s)\;.$$
Observe that the functions $f_s$ and $g \circ \phi_s^{-1}$, $s \in [0;1]$ form a compact family
with respect to $C^{\infty}$-topology, and hence by bracket correspondence (P2)
$\max_s\|dB/ds\|_{\op{op}} = \bigo(1/k^2)$. Thus
$$\|Q_k(g \circ \phi^{-1}) - U_k Q_k(g) U_k^{-1}\|_{\op{op}}= \|\int_0^1 dB/ds(s)\; ds \|_{\op{op}}= \bigo(1/k^2)\;,$$
as required. \qed

\subsection{Proof of Theorem \ref{thm-main-0}} \label{sec:proof-theorem-main0}

Suppose that we have two Hamiltonian paths $\gamma_0= \phi_{0,t}$ and $\gamma_1= \phi_{1,t}$,
$t \in [0;1]$ with $\phi_{0,1}=\phi_{1,1} = \phi$, which are homotopic with fixed end points
through a family $\phi_{t,s}$, $s \in [0,1]$.  Denote by $U_k(\phi_{1,j})$ the time one map of the Schroedinger evolution obtained by the quantization of $\gamma_j$. We claim that
\begin{equation}\label{eq-homotopy}
\|U_k(\phi_{1,1})-U_k(\phi_{1,0})\|_{\op{op}} = \bigo(1/k)\;.
\end{equation}
To see this, look at the family $\phi_{t,s}$ and denote by $p_{t,s}$ the generating Hamiltonian
when $s$ is fixed, $t$ varies, and by $q_{t,s}$ the Hamiltonian when $t$ is fixed, $s$ varies.
All the Hamiltonians are assumed to have zero mean. Then
\begin{equation}\label{eq-partial}
\partial_s p = \partial_t q + \{p,q\}\;.
\end{equation}

Put $A=\hbar_k^{-1}Q_k(p)$, $C=\hbar_k^{-1}Q_k(q)$. Let $U(t,s)$ be the unitary evolution of
$$\partial_t U = -iAU\;$$ with $U(0,s) =\Id$. Note that
$$
U_k(\phi_{1,1})= U(1,1),\;\;U_k(\phi_{1,0}) = U(1,0)\;.$$
Define $B$ by
\begin{equation}\label{eq-U}
\partial_s U = -iBU\;.
\end{equation}
Then
$$\partial_s \partial_t U = -iA\partial_s U - i \partial_s A U = -iABU -i \partial_s A U \;,$$
$$\partial_t \partial_s U = -iB\partial_t U - i \partial_t B U= -iBAU - i \partial_t B U\;.$$
Subtracting and rearranging we get
$$\partial_t B  = \partial_s A  -i[A,B]\;.$$
Further, by \eqref{eq-partial}
$$\partial_t C = \hbar_k^{-1}Q_k(\partial_t q) = \hbar_k^{-1}Q_k(\partial_s p) + \hbar_k^{-1}Q_k (\{p,q\})= \partial_s A + \hbar_k^{-1}Q_k (\{p,q\}) \;.$$
Thus
$$\partial_t (B-C) = \hbar_k^{-2}\left(-i[Q_k(p)Q_k(q)] - \hbar_k Q_k (\{p,q\}\right)= \bigo(1/k)\;,$$
by bracket correspondence (P2).
Observe that $\partial_s U(0,s)=0$, so $B(0,s)=0$.
Further, $q(0,s) =0$, so $C(0,s)=0$. Thus
$$\|B(1,s) -C(1,s)\|_{\op{op}} = \bigo(1/k)\;.$$
But $C(1,s) = 0$ since $q(1,s)=0$. Thus
$\|B(1,s)\|_{\op{op}} = \bigo(1/k)$ and hence by \eqref{eq-U}
$$\|U(1,1)-U(1,0)\|_{\op{op}} = \bigo(1/k)\;,$$
and \eqref{eq-homotopy} follows. This proves item (i) of the theorem.

\medskip
\noindent
Let's analyze the quantization of the product of two Hamiltonian paths.
Let $\phi_t$ and $\psi_t$ be two paths generated by normalized Hamiltonians
$f_t$ and $g_t$ respectively, and denote $\theta_t=\phi_t\psi_t$.   Consider the corresponding Schroedinger evolutions
$$\dot{U}_k = -i\hbar_k^{-1}Q_k(f_t)U_k\;\;, U_k(0) =\Id\;,$$
$$\dot{V}_k = -i\hbar_k^{-1}Q_k(g_t)V_k\;\;, V_k(0) =\Id\;.$$
Put $$S(t) = Q_k(f_t) + U_k(t)Q_k(g_t)U_k(t)^{-1}, \;\;\; W_k(t) = U_k(t)V_k(t)\;.$$
Observe that
\begin{equation}\label{eq-W}
\dot{W_k} = -i\hbar_k^{-1}S(t)W\;.
\end{equation}
Since $\theta_t$ is generated by
$h_t:=f_t + g_t \circ \phi_t^{-1}$,
the Egorov theorem (Theorem \ref{thm-Egorov} ) yields
$$Q_k (h_t) = S(t) + \bigo(1/k^2)\;.$$
Denote by $Z_k(t)$ the Schroedinger evolution of $\theta_t$, that is
$$\dot{Z}_k = -i\hbar_k^{-1}Q_k(h_t)Z_k = (-i\hbar_k^{-1}S(t)+ \bigo(1/k))Z_k\;\;, Z_k(0) =\Id\;.$$
Comparing this equation with \eqref{eq-W} we conclude that
$$\|U_k(1)V_k(1) - Z_k(1)\|_{\op{op}} = \bigo(1/k)\;.$$
Thus $\mu_k$ is an almost-representation, which proves item (ii) of the theorem.

\medskip
\noindent
Finally, assume that a Hamiltonian $f_t$ generates a Hamiltonian path $\phi_t$ with
$\phi_1 \neq \Id$. Thus $\phi_1$ displaces an open set $Y \subset M$:
$\phi_1 (Y) \cap Y =\emptyset$. Take a non-vanishing function $g$ supported in $\phi_1(Y)$. Observe that
\begin{equation}
\label{eq-gg}
\|g \circ \phi^{-1} - g\|=\|g\|\;.
\end{equation}
Put $A_k:= Q_k(g)$.
Let $U_k$ be the unitary operator quantizing $\phi_1$. By the Egorov theorem,
$Q_k(g \circ \phi^{-1}) = U_kA_kU_k^{-1} + \bigo(1/k^2)$.
It follows from \eqref{eq-gg} and (P1) that
$\|U_kA_kU_k^{-1} - A\|_{\op{op}} = \|A\|_{\op{op}}+\bigo(1/k)$.
Estimating
$$\|A\|_{\op{op}}+\bigo(1/k) = \|U_kA_kU_k^*-A\|_{\op{op}} = $$
$$ \|U_kAU_k^*-U_kA+U_kA-A\|_{\op{op}}\leq 2\|A\|_{\op{op}} \cdot\|\Id-U_k\|_{\op{op}}\;,$$
we get that $\|\Id-U_k\|_{\op{op}} \geq 1/2+\bigo(1/k)$,
which proves item (iii) of the theorem.
\qed

\begin{rem}\label{rem-proj-vsp}{\rm Replacing $U_k$ by $e^{i\theta} U_k$ in
    the proof of (iii), we get that $$\|U_k -e^{i\theta}\Id\|_{\op{op}}
    \geq 1/2+\bigo(1/k)$$ for every phase $\theta$. This implies that the approximate projective representation $\nu_k$ appearing right after Theorem \ref{thm-fine-1} satisfies, for every $\phi \in \Ham(M,\omega)$,
$$\delta_p(\nu_k(\phi),\Id) \geq \text{const} >0, \;\; \forall k \in \N\;,$$
provided $\phi \neq \Id$.}
\end{rem}

\section{Loop quantization}\label{sec-loop}
In this section we prove Theorem \ref{thm-fine-1} from the introduction.
A more detailed formulation of this result appears in Theorem \ref{th:loop} below.

\subsection{Action and Maslov index}
Let $(M, \om)$ be a compact symplectic manifold equipped with a prequantum
line bundle $L$ and an auxiliary line bundle $L'$ such that
\begin{gather} \label{eq:hyp}
 c_1^{\R} (L') = \la c_1^{\R} (L) + \tfrac{1}{2} c_1^{\R} ( K)
\end{gather}
where
$K$ is the canonical line bundle.

Since $\frac{1}{i}\om$ is the curvature of $L$, the periods of $\om$ are
multiple of $2 \pi$, so the action of any contractible periodic trajectory
$\ga(t)$, $t \in [0,T]$ of a Hamiltonian
$(H_t)$ is well-defined modulo $2 \pi \Z$ and given by the usual formula
\begin{gather} \label{eq:defaction}
A (\ga)  = \textstyle{\int}_D \om - \int_0^T H_t ( \ga (t) ) dt
\end{gather}
where $D$ is a disc with boundary $\ga$.
We can even
define the action modulo $2\pi$ of any periodic trajectory, by using parallel
transport in $L$ instead of the integral of $\om$.

If $(H_t)$ generates a loop $\mathcal{L} = ( \phi_t, \, t \in [0,1])$ of Hamiltonian diffeomorphisms, then our assumption on $L'$ allows to define a mixed
action-Maslov invariant as follows \cite{P-loop}. By Floer theory, any trajectory
$  \phi_t(x)$, $t \in [0,1]$ is the boundary of a disc $D$.
We set
\begin{gather}
\label{eq:invmaslovaction}
I ( \mathcal{L} ) = \la \bigl( \textstyle{\int}_D \om - \int_0^1 H_t ( \phi_t (x) )
\, dt \bigr)  + \pi m  (
\psi)
\end{gather}
where $\psi$ is the loop of $\op{Sp} (2n)$ obtained by trivialising
the symplectic bundle $TM$ over $D$ and defining $\psi (t) := T_{x} \phi_t $,
$m ( \psi) = 0$ or $1$  according to the class of $\psi$ in $\pi_1 ( \op{Sp} (2 n)) =\Z$ is
even or odd. One readily checks that $I(\mathcal{L})$ is well defined modulo $2 \pi \Z$.

\subsection{Quantization of  a Hamiltonian loop}

Assume now that $(M, \om)$ is K\"ahler, that $L$ and $L'$ are
holomorphic hermitian line bundles with Chern curvatures $\Theta$ and $\Theta'$
satisfying $\Theta = \frac{1}{i} \om$, $\Theta' = \la \Theta + \frac{1}{2}
\Theta_K$. Consider the space  $\Hilb_k$  of holomorphic
sections of $L^k \otimes L'$. For any $f \in \Ci ( M , \R)$, we define
the operator $\Op_k  ( f)$ as in \eqref{eq:def_op_k}

Let $(H_t)$ be a Hamiltonian of $M$ generating a loop
$\mathcal{L} = (\phi_t, \, t \in [0,1])$.  Introduce the quantum propagator
$U_{t,k}$,
$$ \frac{1}{i(k+ \la)} \partial_t U_{k,t} + \Op_k ( H_t) U_{k,t} = 0 , \qquad
U_{k,0} = \Id$$
We assume from now on that $M$ is connected, so the periodic trajectories $(\phi_t(x), \, t \in
[0,1])$ have all the same action, denoted by $A( \mathcal{L})$.

\begin{thm} \label{th:loop}
  We have $U_{k,1} = e^{ik A ( \mathcal{L} ) + i I( \mathcal{L})} +
  \bigo ( k^{-1})$.
 \end{thm}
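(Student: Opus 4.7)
My strategy is to split the proof into two stages: first establish that $U_{k,1}$ is asymptotically a multiple of the identity, then identify the scalar. Since $\mathcal{L}$ is a loop, $\phi_1 = \Id$, and the Egorov theorem (Theorem~\ref{thm-Egorov}) applied to the fine quantization yields
$$\|U_{k,1}\,\Op_k(g)\,U_{k,1}^{-1} - \Op_k(g)\|_{\op{op}} = \bigo(k^{-2})$$
for every $g \in C^\infty(M)$. I would then invoke the standard asymptotic irreducibility of the Berezin--Toeplitz calculus in the K\"ahler setting: any sequence of unitaries commuting with $\Op_k(g)$ modulo $\bigo(k^{-1})$ for all $g$ must be scalar modulo $\bigo(k^{-1})$. (This uses the off-diagonal decay of the Bergman kernel and the fact that the normalized coherent states form an asymptotically overcomplete family; the argument is classical and appears in \cite{oim_eq}.) Consequently there exists $c_k \in S^1$ with
$$U_{k,1} = c_k \Id + \bigo(k^{-1}).$$

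\textbf{Identification of $c_k$.} Fix $x \in M$ and the periodic orbit $\gamma(t) = \phi_t(x)$, and let $e_{k,x}$ be the normalized coherent state of $\Hilb_k$ at $x$. Then $c_k = \langle e_{k,x}, U_{k,1} e_{k,x}\rangle + \bigo(k^{-1})$, so it suffices to compute this matrix element via a WKB ansatz for $U_{k,t} e_{k,x}$ of the form $\alpha_k(t) e_{k,\gamma(t)} + \bigo(k^{-1/2})$, which is the route developed in \cite{CLF}. Differentiating the ansatz using the Schr\"odinger equation $\partial_t U_{k,t} = -i(k+\lambda)\Op_k(H_t)U_{k,t}$ and the fact that $\Op_k(H_t)$ acts on $e_{k,\gamma(t)}$ by the symbol $H_t(\gamma(t))$ up to lower order, one picks up the following contributions to the phase $\log\alpha_k(1)/i$:
\begin{itemize}
\item a Hamiltonian contribution $-(k+\lambda)\int_0^1 H_t(\gamma(t))\,dt$ from the explicit $\Op_k(H_t)$ term;
\item a prequantum parallel transport contribution $(k+\lambda)\int_D\omega$, coming from moving the coherent state along $\gamma$ using the Chern connection of $L^k \otimes L'$, whose curvature component along $L$ yields, via $\Theta = \omega/i$, the symplectic area;
\item a Maslov-type contribution $\pi m(\psi)$ from the $\tfrac12 c_1(K)$ part of $L'$: trivializing $L'$ along the trajectory requires selecting a square root of a loop in the canonical bundle, and the obstruction is exactly the parity of the Maslov index $m(\psi)$.
\end{itemize}
Adding these, $c_k = \exp\bigl(i(k+\lambda)A(\mathcal{L}) + i\pi m(\psi)\bigr) = \exp\bigl(ik A(\mathcal{L}) + iI(\mathcal{L})\bigr)$, as required.

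\textbf{Main obstacle.} The reduction to a scalar via Egorov plus asymptotic irreducibility is routine. The genuine technical difficulty is the rigorous WKB/coherent-state analysis yielding the three phase contributions, in particular the subhalf-integer Maslov piece: one must show that the $\tfrac12 c_1(K)$ component of $L'$ produces exactly the $\pi m(\psi)$ term, and not a globally ambiguous sign. This is where I would rely on \cite{CLF}, which carries out the parallel transport computation in a trivializing symplectic frame along $\gamma$ and identifies the resulting $\mathbb{Z}/2$ holonomy with the parity of the Maslov class of the linearized flow. Once that step is granted, collecting the contributions and using $I(\mathcal{L}) = \lambda A(\mathcal{L}) + \pi m(\psi)$ completes the proof.
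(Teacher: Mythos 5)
Your proposal takes a genuinely different route from the paper in Stage~1 but contains a gap there. The paper does not invoke Egorov plus an irreducibility argument; instead it cites directly \cite[Theorem 4.2]{CLF}, which says that the Schwartz kernel of $U_{k,t}$ is a Lagrangian state associated to the graph of $\phi_t$. Since $\phi_1=\Id$, this immediately gives $U_{k,1}=e^{ik\theta}T_k(\sigma)+\bigo(k^{-1})$ for some real $\theta$ and some function $\sigma$, and the paper then evaluates the on-diagonal kernel via \cite[Theorem 1.1]{CLF} to read off $\theta = A(\mathcal{L})$ and $\sigma \equiv e^{iI(\mathcal{L})}$. Your Stage~2 (computing the phase via coherent states / WKB along the orbit, with contributions $(k+\lambda)A(\mathcal{L})$ and $\pi m(\psi)$) is essentially equivalent to the paper's bookkeeping, differing only in how the contributions of $L$, $L_1=L'\otimes\delta^{-1}$, the subprincipal symbol $\lambda H_t$, and the half-form $\delta$ are grouped; you arrive at the same $ikA(\mathcal{L})+iI(\mathcal{L})$.

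The gap is in your Stage~1 ``asymptotic irreducibility'' claim: that a sequence of unitaries commuting with all $\Op_k(g)$ modulo $\bigo(k^{-1})$ must be scalar modulo $\bigo(k^{-1})$. This is false as stated. Take $V_k = e^{i\Op_k(h)/\sqrt{k}}$ for any nonconstant $h$. Then $V_k\Op_k(g)V_k^{-1}-\Op_k(g) = \tfrac{i}{\sqrt{k}}[\Op_k(h),\Op_k(g)]+\dots$, and since $[\Op_k(h),\Op_k(g)]=\bigo(k^{-1})$ by (P2), the conjugation error is $\bigo(k^{-3/2})$, which is in particular $\bigo(k^{-1})$. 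Yet $\Op_k(h)$ has spectrum spread over an interval of fixed length, so $e^{i\Op_k(h)/\sqrt{k}}$ has eigenvalues spread over an arc of length $\sim k^{-1/2}$, and $\inf_{c}\|V_k - c\,\Id\|_{\op{op}}\sim k^{-1/2}$, not $\bigo(k^{-1})$. You would need the sharper form ``$\bigo(k^{-2})$ commutation implies $\bigo(k^{-1})$ scalar,'' which matches the Egorov error you actually have, but this sharpened statement is neither obviously classical nor justified in your sketch (and I do not believe it appears in \cite{oim_eq} in that form). In any case the detour is unnecessary: the WKB/Schwartz-kernel analysis of \cite{CLF} that you invoke in Stage~2 already yields the scalar structure directly, so the paper's route is both simpler and gap-free.
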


 \begin{proof}
   We can rewrite the Schr\"odinger equation as
   $$ \tfrac{1}{ik} \partial_t U_{k,t} + (1 + \tfrac{\la}{k}) \Op_k ( H_t) U_{k,t} =
   0 $$
  Then, by \cite[Theorem 4.2]{CLF}  the Schwartz kernel of $U_{k,t}$ is a Lagrangian state associated to
the graph of $\phi_t$. We refer to \cite{CLF} for the precise definitions.
What is important to us here is that since $\phi_1$ is the identity,
\begin{gather} \label{eq:u_1}
U_{k,1} = e^{i k
  \theta} T_k (\si)  + \bigo ( k^{-1})
\end{gather}
where $\theta$ is a real number, $\si \in \Ci (M)$ and
$T_k (\si )$ is the Berezin-Toeplitz operator with multiplicator $\si$ defined
as in section \ref{sec:an-impr-corr}.

Furthermore, we can compute $\theta$ and
$\si$ by introducing a half-form bundle (i.e., the square root of the canonical bundle)
denoted by $\delta$.
It is possible that such a
bundle does not exist on $M$ but we only need it on the trajectory $\ga$ of a
given point $x$. In this case we take a disk $D$ with boundary $\ga$ and choose the
square root $\delta$ which extends to $D$.

Then by \cite[Theorem 1.1]{CLF}
$$ U_{k,t} ( \phi_t (x) , x) \sim \Bigl( \frac{k}{2\pi} \Bigr)^n
 e^{ \frac{1}{i}
 \int_0^{t} H^{\op{sub}}_{r} ( \phi_r(x)) \; dr} \Bigr[ \phi_{t}^L (x) \Bigl]^{\otimes k} \otimes \mathcal{T}_{t}^{L_1}
(x)  \otimes  \bigl[ \mathcal{D}_t (x) \bigr]^{1/2}\;.
$$
Here $\phi_t^L $ is the prequantum lift of $\phi_t$ to $L$, and
$H_r^{\op{sub}} =\la H_t$ is the subprincipal symbol of $(1 + \frac{\la}{k})
 \Op_k ( H_t)$. The second term $\mathcal{T}_t^{L_1}(x) :L_1|_x \rightarrow L_1|_{\phi_t(x)}$
 is the parallel transport in the line bundle $L_1 = L' \otimes \delta^{-1}$.
 It is the multiplication by $\exp ( i \la  \int_D \om )$ because the
 curvature of $L_1$ is $\Theta' - \frac{1}{2} \Theta_K = \la \Theta =
 \frac{\la}{i} \om$. The last term is the square root of an isomorphism
 $\mathcal{D}_t(x) : K_x \rightarrow K_{\phi_t(x)}$ defined by
 $$ \mathcal{D}_t (x) ( \al ) ( (T_x\phi_t)^{1,0} u ) = \al (u) , \qquad
 \forall \al \in K_x, \, u \in \det T_x^{1,0} M\;. $$
 Here the square root is chosen so as to be continuous and equal to $1$ at $t=0$.

 On the other hand, by \eqref{eq:u_1}, $U_{k,1} (x,x) = \bigl( k/2 \pi \bigr)^n
e^{i k \theta} ( \si (x) + \bigo ( k^{-1} )) $.  Now $\phi_1^L (x) =
e^{ i A( \mathcal{L}) }$ implies that $\theta = A( \mathcal{L})$ and it remains to prove that
\begin{gather} \label{eq:toprove}
e^{ \frac{1}{i}
 \int_0^{1} H^{\op{sub}}_{r} ( \phi_r(x)) \; dr}  \mathcal{T}_{1}^{L_1}
(x)  \otimes  \bigl[ \mathcal{D}_1 (x) \bigr]^{1/2} = e^{i I( \mathcal{L} )}
\end{gather}
Since $T_x \phi_1 $ is the identity of $T_xM$, $\mathcal{D}_ 1(x) $ is the identity of
 $K_x$ so $$( \mathcal{D}_t(x))^{1/2} = \pm \Id_{\delta_x}\;.$$ To determine
 the sign,  we trivialize $TM$ along $\ga$ with an symplectic frame, so
 that $(T_x \phi_t)$ becomes a loop $\alpha$ of symplectic matrices based at the
 identity and in the corresponding trivialisation of $K$, $\mathcal{D}_t(x)$
 is the multiplication by a complex number. The sign we search depends
 only on the homotopy class of $\alpha$.  Since $\op{Sp} (2n)$
 deformation retracts to its subgroup $\op{U}(n)$, we can assume that $\alpha$ is a
 loop of $\op{U}(n)$, in which case $\mathcal{D}_t (x) $ is the complex determinant
 of $\alpha (t)$. Thus, our sign is positive or negative according to the class of $\alpha$ in $\pi_1 (
 \op{Sp} (2n))=\Z$ is even or odd. We conclude that each factor in \eqref{eq:toprove}
 corresponds to a summand in \eqref{eq:invmaslovaction}, which completes the proof.
\end{proof}

\medskip
\noindent {\bf Acknowledgments.} We are grateful to Alex Lubotzky for his help
with the class of groups $\cP\cL\cO_p$, and for valuable comments on
\cite{DGLT,LO}. We thank David Fisher for useful discussions.

\vspace{2cm}

\noindent
\begin{tabular}{ll}
{\bf Laurent Charles} &  {\bf Leonid Polterovich} \\
Sorbonne Universit\'e, Universit\'e de Paris, CNRS &  Tel Aviv University \\
Institut de Math\'ematiques de Jussieu-Paris Rive Gauche & School of Mathematical Sciences\\
F-75005 Paris, France  & 69978, Tel Aviv, Israel\\
{\em E-mail:} \texttt{laurent.charles@imj-prg.fr} & \texttt{polterov@tauex.tau.ac.il}
\end{tabular}

\end{document}